\newtheorem{thm}{Theorem}[section]
\newtheorem{prop}[thm]{Proposition}
\newtheorem{defn}{Definition}[section]
\newtheorem{rem}{Remark}
\DeclareMathOperator*{\argmin}{arg\,min}
\title{An Optimal Control Strategy for Execution of Large Stock Orders Using LSTMs\thanks{This work was partially supported by NSF grant DMS-1907518 and in part by the New York University Abu Dhabi (NYUAD) Center for
Artificial Intelligence and Robotics, funded by Tamkeen under the NYUAD Research Institute Award CG010.\\We are grateful to the Associate Editor and anonymous referees, each of whom provided invaluable feedback on this work.}}
\author{A. Papanicolaou,\thanks{Department of Mathematics, North Carolina State University, Raleigh, NC 27695-8205, {\tt\small apapani@ncsu.edu}}~~~H. Fu,\thanks{Department of Electrical and Computer Engineering, NYU Tandon School of Engineering, Brooklyn, NY 11201, {\tt\small hf881@nyu.edu}}~~~P. Krishnamurthy,\thanks{Department of Electrical and Computer Engineering, NYU Tandon School of Engineering, Brooklyn, NY 11201, {\tt\small prashanth.krishnamurthy@nyu.edu}}~~~B. Healy,\thanks{Institute of Finance and Technology, UCL, London, WC1E 6BT,  {\tt\small b.healy@ucl.ac.uk}}~~~F. Khorrami\thanks{Department of Electrical and Computer Engineering, NYU Tandon School of Engineering, Brooklyn, NY 11201, {\tt\small khorrami@nyu.edu}}}
\date{}							
\begin{document}
\maketitle
\begin{abstract}
In this paper, we simulate the execution of a large stock order with real data and general power law in the Almgren and Chriss model. The example that we consider is the liquidation of a large position executed over the course of a single trading day in a limit order book. Transaction costs are incurred because large orders walk the order book, that is, they consume order book liquidity beyond the best bid/ask. We model the order book with a power law that is proportional to trading volume, and thus transaction costs are inversely proportional to a power of trading volume. We obtain a policy approximation by training a long short term memory (LSTM) neural network to minimize transaction costs accumulated when execution is carried out as a sequence of smaller suborders. Using historical S\&P100 price and volume data, we evaluate our LSTM strategy relative to strategies based on time-weighted average price (TWAP) and volume-weighted average price (VWAP). For execution of a single stock, the input to the LSTM is the cross section of data on all 100 stocks, including prices, volumes, TWAPs and VWAPs. By using this data cross section, the LSTM should be able to exploit inter-stock co-dependence in volume and price movements, thereby reducing transaction costs for the day. Our tests on S\&P100 data demonstrate that in fact this is so, as our LSTM strategy consistently outperforms TWAP and VWAP-based strategies.\\
\newline
\textbf{Keywords:} Price Impact, Order Books, Optimal Execution, LSTM Networks.\\
\textbf{JEL Codes:} C45, G10
\end{abstract}

\section{Introduction}
Institutional investors must consider transaction costs when trading large amounts of stock. For example, each month a multibillion dollar mutual fund may execute several large stock orders when they rebalance their stock holdings. In the U.S. stock market, a large order would be to sell 1,000,000 shares of a stock with a typical daily trading volume of 25 million shares. A naive strategy is to place a single, very large market sell order on the exchange. This single order will consume so much of the available liquidity in the limit order book that there will be an average price per share that is considerably lower than the best bid -- if the order gets filled at all. A better strategy is to divide the trade into smaller suborders, which then get executed over the course of a fixed time period. In this paper, we cast this subdividing of large trades as an optimal control problem, and then we obtain an optimal policy for execution by training a long short term memory (LSTM) neural network \cite{hochreiter1997long} to minimize losses to transaction costs.

Stock market liquidity is made available by market makers who submit limit orders at the different price ticks in the order book. Liquidity is consumed when a trader submits a market order to buy or sell. A market order that is not too big will get filled by limit orders at the best available prices. A large market buy (sell) order will \textit{walk the order book}, that is, it will consume all liquidity at multiple ticks. Walking the order book results in an average price per share that is equal to the initial best ask (bid) plus (minus) a transaction cost. Statistical studies of order book data have shown that the depth to which a large order walks the order book is approximately a concave power law of the number of shares \cite{almgren2005direct,weber2005order}. Simple calculation will show that subdividing a large order into a sequence of suborders will reduce these transaction costs, but to optimally subdivide is more complicated because there are several (stochastic) variables to consider when designing a policy.

It is common practice to evaluate a policy in terms of the average price over the entire trade. Average prices to consider are the time-weighted average price (TWAP) and the volume-weighted average price (VWAP). In general, an optimal suborder policy will minimize the expected value of transaction costs. Strategies that aim to achieve TWAP or VWAP can be optimal for executing large orders \cite{cartea2016closed,kato2015vwap}.

Our approach to the design of an optimal execution policy is to consider the problem amidst the uncertainty of real market data. Machine learning and deep neural networks are very good at learning policies directly from data without the assistance of models. For large-cap U.S. stocks (e.g., the S\&P100 constituents), there is plenty of data available upon which to train a network and conduct backtests. In this paper we perform these backtests, the conclusions from which are that neural networks can be effective for improvement of large-order execution strategies. Our implementation uses LSTM networks, which are a good candidate for constructing a policy function because they (a) do not require a model for the distribution of the market; we can learn directly from the data, (b) can handle Markov or non-Markov states, (c) are well-suited to learn in the episodic environment of single-day execution, and (d) can provide a policy with the required temporal dynamic. 

Of particular relevance in this paper is the possible presence of inter-stock co-dependence in price and volume movements. The execution problem is posed for a single stock, but the input to the network includes information from all stocks in the dataset, which allows the LSTM network to learn inter-stock dependencies that may provide better prediction of prices and volumes, thereby improving the execution prices achieved by the policy. Inputting such a large amount of data might be problematic for more parsimonious policy functions, but the LSTM is more than capable of handling this high dimensionality, and indeed, from our results on S\&P100 stocks it appears that LSTM does learn useful patterns in this dataset.

\subsection{Background and Review}
A prototypical model for optimal execution was introduced in \cite{almgren2001optimal}. A simplification of their paper is to assume that liquidity consumed by a market order refills quickly in the time between scheduled suborders, and therefore a simple formulation of the problem will consider only the temporary impact on price. The speed of impacted-price reversion is studied in \cite{huberman2005optimal} with discussion on how faster reversion rates affect the aggressiveness of trade execution. There are also models that consider price impact with slower reversion of impacted price, for example \cite{alfonsi2016dynamic,amaral2019price,bacry2015hawkes} using Hawkes processes, and \cite{obizhaeva2013optimal} where it is shown that an optimal execution policy should start with a large order to disrupt the order book's supply and demand balance, and then begin trading continuously as the order book refills. 

Power laws have been observed in stock markets in the UK, China and the United States, as reported and discussed in \cite{gould2013limit,bouchaud2002statistical, potters2003more}, \cite{zovko2002power}, \cite{maskawa2007correlation}, \cite{gu2008empirical}. \cite{almgren2005direct} estimated the power-law exponent to be around 0.67, which they obtained from a large dataset from Citibank. Closely related to our paper is \cite{hendricks2014reinforcement} where a reinforcement approach is used to approximate the solution to \cite{almgren2001optimal}, and also the Deep Q-Learning approach to optimal execution that was studied in \cite{ning2021double}. 

Recently, there has been some progress on the development of machine learning algorithms for order book modeling and price execution \cite{lin2020end,nevmyvaka2006reinforcement,zhang2019deeplob,ZGA21}, and also \cite{schnaubelt2022deep} where reinforcement learning is implemented for optimal limit order placement in crypto markets.

\subsection{Results in this Paper}
The main result in this paper is the improved execution strategies that we find by using LSTM. We assume that limit order depth at each tick is proportional to volume and increases by a power law across ticks as we move farther from the best bid/ask. A major advantage of our LSTM approach is that the network's input includes the cross section of market data, thereby utilizing any inter-stock co-dependence that may be present in volume or price changes. To evaluate the efficacy of our approach, we implement LSTM execution on historical minute-by-minute stock market data from January 2020 - July 2022. Our results indicate that, compared with TWAP and VWAP strategies, execution with a trained LSTM network can save a 1-2 basis points (bps) per stock on a given day when executing a block trade of S\&P100 stocks.

\section{Order Book Model and Optimal Policies}
Let $S_t$ and $V_t$ denote the mid price and volume, respectively, of a stock at time $t$. Following the model described in \cite{platania2018modelling,rogers2010cost}, the order book has limit order distribution $\rho(t,s)\geq 0$, where the units of $s$ are ticks relative to $S_t$. Ticks with limit sell orders correspond to $s>1$, ticks with $s\in(0,1)$ correspond to limit buy orders, and the mid price corresponds to $s=1$. An order of $a$ shares consumes liquidity up to a relative price $r_t(a)$ such that
\begin{equation}
    \label{eq:tick_depth}
    a = \int_1^{r_t(a)}\rho(t,s)ds\ .
\end{equation}
A simple form for $\rho$ has limit orders distributed continuously in $s$, proportionally to volume with the following power law,
\begin{equation}
    \label{eq:rho}
    \rho(t,s) = \frac{V_t}{\epsilon }|s-1|^\beta 
\end{equation}
where $\beta\in[0,\infty)$, $\epsilon>0$ is a scaling parameter, and $V_t$ is the trading volume at time $t$. In this paper, we seek to optimize execution of a large order over the course of a single trading day, in which case each $V_t$ will be the total volume of trades that occurred in the $t^{th}$ minute. The impact function in \eqref{eq:rho} is like the power law considered in \cite{qb20201008,qb20200424}.

When relative price $r_t(a)$ in \eqref{eq:tick_depth} is computed with distribution $\rho(t,s)$ in \eqref{eq:rho}, we see a price that is a concave function of order size divided by volume,
\begin{equation}
    \label{eq:impactedPrice}
    r_t(a) = 1+\hbox{sign}(a)\left(\frac{\epsilon(\beta+1)}{V_t}|a|\right)^{\tfrac{1}{\beta+1}}\ .
\end{equation}  
The quantity $S_t r_t(a)$ can be thought of as the \textit{impacted price}. If impacted price is assumed to be a linear function of $a$, then the implication is that $\beta=0$ so that the order book has equal liquidity at all ticks. The prevailing conclusion in many empirical studies is that impacted price is a sub-linear concave function \cite{almgren2005direct,bershova2013non,bouchaud2010price,cont2014price}, such as the square root function corresponding to the case $\beta =1$. Cases where $\beta$ is less than zero are not considered because this would imply decreasing liquidity in successive ticks beyond the best bid/ask, which is rarely the case for liquid, large-cap stocks. 

The transaction costs incurred by walking the order book, as  described by \eqref{eq:tick_depth}, \eqref{eq:rho} and \eqref{eq:impactedPrice}, will be a convex function of trade size. The dollar amount of trading loss due to the price impact is computed as follows,
\begin{align}
    \label{eq:loss}
    \hbox{loss}(t,a)&=
    S_t\left|\int_1^{r_t(a)}s\rho(t,s)ds - a\right|=C_{\epsilon,\beta}S_t(V_t)^{-\tfrac{1}{\beta+1}}|a|^{\tfrac{\beta+2}{\beta+1}}\ ,
\end{align}
where $C_{\epsilon,\beta} =\tfrac{1}{\beta+2} (\epsilon(\beta+1))^{\tfrac{\beta+2}{\beta+1}}$. From the convexity of \eqref{eq:loss} with respect to $|a|$ it is clear that very large orders should be divided into suborders to reduce transaction costs.

\begin{rem}[Cost of Paying the Spread]
This order book model ignores the bid-ask spread. For liquid stocks the bid-ask spread is usually 1 tick, or equivalently 1\textcent.
For such stocks the transaction costs for market orders filled at the best bid/ask can be proxied by $\$.005$. This amounts to a flat fee for execution of an order and will remain constant for all the strategies that we test in this paper. Therefore, we omit the cost of the spread.
\end{rem}

\begin{rem}[Exchange Fees]
    Typically there are exchange fees that may be proportional to the dollar amount traded for smaller orders. In this paper we do not consider these fees because the problem we are considering is from the perspective of a large institutional investor for whom fees are structured differently, usually decreasing percentage-wise as the trade size increases.
\end{rem}

\begin{rem}[Permanent Impact]
    We do not consider permanent impact in this paper. The assumption is that we are trading in highly liquid stocks for which the order book is replenished very quickly after a suborder. It would certainly be interesting to consider execution with permanent price impact, but in this paper we focus our effort on finding policies that are able to optimize amidst the stochasticity and uncertainty in real-life historical price and volume data. 
\end{rem}

\subsection{Optimal Execution Policy}
\label{sec:optimalPolicy}
Let us work on a probability space $(\Omega,\mathcal F,(\mathcal F_t)_{t=0,1,2,\dots},\mathbb P)$ where $\mathcal F_t$ denotes the $\sigma$-algebra representing all information known to us at time $t$. Assume that times $t=0,1,2,3,\dots$ are equally spaced. Let the initial inventory be a number of shares $x$, and consider the situation where this inventory needs to be completely liquidated by terminal time $T$; in the example we present the times $t=0$ and $t=T$ are the opening and closing of the trading day, respectively. Let $X_t$ denote the number of remaining unexecuted shares at time $t$, for $t=1,2,3,\dots, T $. Initially we have $X_0=x$. An execution policy is a sequence of $\mathcal F_{t}$-adapted suborders $a_t$ such that $X_t = X_{t-1}+a_{t-1}$, that is, $a_{t-1}$ is this execution policy's suborder placed at time $t-1$ and executed at time $t$; these suborders are chosen so that $X_T=0$. Using the loss function given in \eqref{eq:loss}, an optimal execution policy is the minimizer of the expected loss,
\begin{align}
    \label{eq:optimal_exec}
    \min_a~& \mathbb E\sum_{t=1}^T\hbox{loss}(t,a_{t-1})\\
    \nonumber
    &\hspace{.2cm}\hbox{s.t.}\\
    \nonumber
    X_{t} &= X_{t-1} + a_{t-1}\\ 
    \nonumber
    X_T &= 0\\
    \nonumber
    X_0&=x\ .
\end{align}
where the minimization is carried out over the family of $\mathcal F_t$-measurable policies $a_t$. In \eqref{eq:optimal_exec} we are allowing for broad generality of the processes $(S_t,V_t)$, aside from them being well defined on probability space $(\Omega,\mathcal F,(\mathcal F_t)_{t=0,1,2,\dots},\mathbb P)$, and also assuming that our trading does not affect them. Later on we will narrow the family of policies to those $a_t$ that are given by an LSTM network, but still this narrowing of policies will permit broad generality in the distribution of $(S_t,V_t)$ such as being non-Markovian and having nonlinear dependence in volume.

The loss function in \eqref{eq:optimal_exec} is a risk-neutral optimization in the sense that there is no penalty on variance or risk. The optimization in \eqref{eq:optimal_exec} results in a policy that is on the frontier constructed in \cite{almgren2001optimal}; their frontier is formed by minimizing implementation shortfall with a penalty on its variance. The objective in \eqref{eq:optimal_exec} is similar to the objective in \cite{kato2015vwap}, which is risk neutral and also volume dependent.

\subsection{TWAP and VWAP Strategies}
\label{sec:TWAPandVWAP}
Before we approach solving \eqref{eq:optimal_exec} with full generality in $(S_t,V_t)$, we first discuss the two industry-standard benchmarks for large order execution, namely TWAP and VWAP, and how they are related to \eqref{eq:optimal_exec}.

\begin{defn}[TWAP]
    \label{def:TWAP}
    The time weighted average price (TWAP) is $\overline{S_T} = \frac{1}{T}\sum_{t=1}^TS_t$.
\end{defn}
\noindent The TWAP is a target for some execution policies because average execution price for large orders are often benchmarked against TWAP. A common execution strategy is the so-called \textit{TWAP strategy}, wherein the policy is to subdivide the order into equally-sized deterministic suborders, 
\begin{equation}
    \label{eq:TWAPstrat}
    a_t = -\frac{x}{T}\qquad\hbox{for }t=0,1,\dots,T-1\ .
\end{equation}
The TWAP is easy to implement, as it is assured to satisfy terminal condition $X_T=0$ and doesn't require any parameter estimation. However, the TWAP strategy ignores volume and any other pertinent information acquired during the trading period. Indeed, volume is often a concern when it comes to evaluating trades, which is why execution policies often target the VWAP (see \cite{cartea2016closed} for more on VWAP targeting). 
\begin{defn}[VWAP]
    \label{def:VWAP}
    The volume weighted average price (VWAP) is $\overline{S_V} = \frac{\sum_{t=1}^TV_tS_t}{\sum_{t=1}^TV_t}$.
\end{defn}
\noindent A VWAP strategy is to subdivide proportional to moments of the volume,
\begin{align}
    \label{eq:VWAPstrat}
    a_t &= -\frac{x\overline V_{t+1}}{\sum_{t=1}^T\overline V_t}\qquad\hbox{for }t=0,1,2\dots,T-1\ ,
\end{align}
where $\overline{V_t}=\left(\mathbb E(V_t)^{-\frac{1}{\beta+1}}\right)^{-(\beta+1)}$ (see \cite{kato2015vwap}). This VWAP strategy can be effective for single-day execution because volume follows a somewhat predictable ``U" shape (see Figure \ref{fig:Ushape}). In practice, the building of a VWAP strategy requires some prior data to determine the typical evolution of volume over a trading period. For example, if we observe a history of volumes from past trading days then we can use historically estimated $\overline {V_t}$'s in \eqref{eq:VWAPstrat}.

\begin{figure}[ht]
   \centering
    \captionsetup{width=.8\linewidth}
     \includegraphics[scale=0.5]{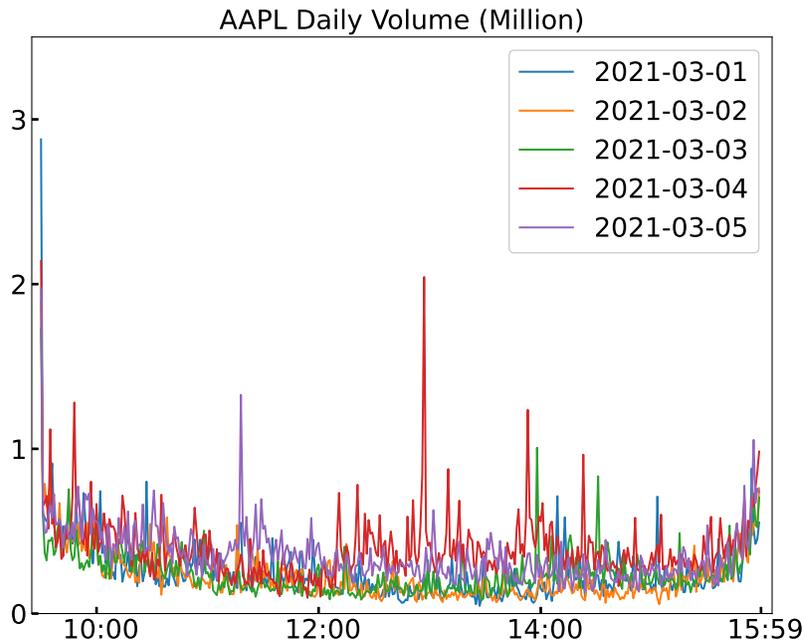}
     \caption{The ``U"-shaped pattern of daily volume in Apple.}
\label{fig:Ushape}
\end{figure}

In general, the TWAP strategy of \eqref{eq:TWAPstrat} and the VWAP strategy of \eqref{eq:VWAPstrat} do not minimize the loss in \eqref{eq:optimal_exec}. However, with some simplification of the volume process and minimal assumptions on $S_t$, we can show that VWAP is optimal if $V_t$ is a deterministic function of $t$, and we can show TWAP is optimal if $V_t$ is deterministic and constant in $t$.  

\begin{prop}[Deterministic Volume]
    \label{prop:optimal_deterministic}
    Suppose $V_t=\mathbb EV_t=\overline {V_t}$ for all $t$, that is, volume is a deterministic function of $t$. Also assume that $S_t$ is a martingale with respect to the filtration $(\mathcal F_t)_{t=0,1,2,\dots}$. Then the VWAP strategy \eqref{eq:VWAPstrat} is optimal, and if $\overline{V_t}$ is constant in $t$ then the TWAP strategy of \eqref{eq:TWAPstrat} is optimal. 
\end{prop}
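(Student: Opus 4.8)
The plan is to reduce \eqref{eq:optimal_exec} to a deterministic convex program, read off the VWAP policy as its minimizer, and then upgrade to optimality over \emph{all} adapted policies by exploiting the martingale hypothesis through a summation-by-parts argument. Substituting the deterministic volume $V_t=\overline{V_t}$ into \eqref{eq:loss}, I would write $\hbox{loss}(t,a_{t-1})=C_{\epsilon,\beta}\,w_t\,S_t\,|a_{t-1}|^{p}$, where $w_t=\overline{V_t}^{-1/(\beta+1)}$ and $p=\tfrac{\beta+2}{\beta+1}>1$, so the objective becomes $C_{\epsilon,\beta}\sum_{t=1}^T w_t\,\mathbb E\!\left[S_t|a_{t-1}|^{p}\right]$. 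To guess the minimizer, I would momentarily restrict to deterministic policies: the martingale property gives $\mathbb E S_t=S_0$, collapsing the objective to $C_{\epsilon,\beta}S_0\sum_t w_t|a_{t-1}|^{p}$, a strictly convex function of $(a_0,\dots,a_{T-1})$ minimized under the linear constraint $\sum_{t=1}^T a_{t-1}=-x$. A first-order (Lagrange) analysis gives $|a^*_{t-1}|^{p-1}\propto \overline{V_t}^{\,1/(\beta+1)}$, and since $p-1=1/(\beta+1)$ this yields $|a^*_{t-1}|\propto\overline{V_t}$; imposing the constraint recovers exactly the VWAP allocation \eqref{eq:VWAPstrat}.

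For global optimality, let $a$ be any admissible adapted policy and set $\delta_{t-1}=a_{t-1}-a^*_{t-1}$, which is $\mathcal F_{t-1}$-measurable. Convexity of $u\mapsto|u|^{p}$ supplies the tangent-line bound $|a_{t-1}|^{p}\ge |a^*_{t-1}|^{p}+m_t\,\delta_{t-1}$, with $m_t=p\,\hbox{sign}(a^*_{t-1})|a^*_{t-1}|^{p-1}$ deterministic. Multiplying by $C_{\epsilon,\beta}w_t S_t\ge 0$, summing over $t$, and taking expectations, I obtain $\mathbb E\sum_{t=1}^T\hbox{loss}(t,a_{t-1})\ge \mathbb E\sum_{t=1}^T\hbox{loss}(t,a^*_{t-1})+C_{\epsilon,\beta}\sum_{t=1}^T w_t m_t\,\mathbb E\!\left[S_t\delta_{t-1}\right]$, so everything reduces to showing that the cross term is zero.

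This is the crux. Because $\delta_{t-1}\in\mathcal F_{t-1}$ and $S$ is a martingale, $\mathbb E[S_t\delta_{t-1}]=\mathbb E[S_{t-1}\delta_{t-1}]$. The key structural fact is that the VWAP first-order condition makes the weight $w_t m_t$ constant in $t$ (a direct computation: $w_t m_t\propto \overline{V_t}^{-1/(\beta+1)}\overline{V_t}^{\,1/(\beta+1)}$), so it factors out and the cross term becomes a multiple of $\mathbb E\big[\sum_{s=0}^{T-1}S_s\delta_s\big]$. Writing $P_t=X_t-X^*_t=\sum_{k=0}^{t-1}\delta_k$, I note $P_t\in\mathcal F_{t-1}$ and $P_0=P_T=0$ since both policies liquidate. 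Summation by parts rewrites $\sum_{s=0}^{T-1}S_s\delta_s=-\sum_{j=1}^{T-1}(S_j-S_{j-1})P_j$, and because $P_j$ is $\mathcal F_{j-1}$-measurable while $\mathbb E[S_j-S_{j-1}\mid\mathcal F_{j-1}]=0$, each summand has zero expectation. Hence the cross term vanishes, $a^*$ is optimal, and setting $\overline{V_t}$ constant in $t$ reduces \eqref{eq:VWAPstrat} to \eqref{eq:TWAPstrat}, giving the TWAP claim.

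The main obstacle, as the argument makes clear, is precisely the last step of ruling out adaptive, price-responsive policies. Restricted to deterministic policies the result is an elementary convex optimization, but the genuine content is that no strategy reacting to realized prices can do better. This is exactly where the martingale assumption earns its place: combined with the constant-weight structure peculiar to VWAP, summation by parts annihilates the first-order term, and convexity of the loss then closes the gap. I would expect the only delicate bookkeeping to be the measurability claim $P_t\in\mathcal F_{t-1}$ and the boundary conditions $P_0=P_T=0$, both of which follow directly from $X_t=X_{t-1}+a_{t-1}$ with $a_{t-1}$ adapted and the terminal constraint $X_T=0$.
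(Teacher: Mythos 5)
Your proof is correct, but it takes a genuinely different route from the paper. The paper writes the constrained problem as a Lagrangian with an $\mathcal F_t$-adapted multiplier process $\lambda_t$, derives the co-state equations (stationarity in $a_t$ and the martingale condition $\lambda_t=\mathbb E_t\lambda_{t+1}$), and then verifies that the VWAP policy together with $\lambda_t=-\tfrac{\beta+2}{\beta+1}\bigl(x/\sum_t\overline{V_t}\bigr)^{1/(\beta+1)}\mathbb E_t S_{t+1}$ satisfies them; the martingale property of $S$ enters exactly where your constancy of $w_tm_t$ does. Your argument instead guesses the candidate from the deterministic convex program and then proves global optimality over all adapted policies directly: the tangent-line bound from convexity of $u\mapsto|u|^p$, the observation that $w_tm_t\propto \overline{V_t}^{-1/(\beta+1)}\overline{V_t}^{1/(\beta+1)}$ is constant, and summation by parts against the martingale increments with $P_0=P_T=0$ and $P_j\in\mathcal F_{j-1}$. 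The computations all check out ($p-1=1/(\beta+1)$ is precisely what makes $|a^*_{t-1}|\propto\overline{V_t}$ and the weight constant), and the nonnegativity of $S_t$ that you need when multiplying the convexity inequality is implicit for stock prices. What your approach buys is a self-contained verification theorem: the paper's proof establishes only first-order stationarity and leaves the passage to global optimality to the (unstated) convexity of the loss, whereas your chain of inequalities delivers the global claim explicitly, including over price-responsive adapted policies. What the paper's approach buys is reusability: the same Lagrangian machinery carries over verbatim to the stochastic-volume case of Proposition \ref{prop:optimal_stochastic}, where the candidate policy can no longer be guessed from a deterministic reduction.
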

\begin{proof}
    For deterministic volume and omitting constant $C_{\epsilon,\beta}$, the optimization in \eqref{eq:optimal_exec} for stock $i$ can be posed as
\begin{align*}
    \min_a~&\mathbb E\Bigg[\sum_{t=1}^TS_t(\overline{V_t})^{-\tfrac{1}{\beta+1}}|a_{t-1}|^{\tfrac{\beta+2}{\beta+1}} \Bigg]\\
    &\hspace{.2cm}\hbox{s.t.}\\
    \nonumber
    X_{t} &= X_{t-1} + a_{t-1}\\ 
    X_T &= 0\\
    X_0&=x\ .
\end{align*}
This optimization can be written as a Lagrangian,
\begin{align*}
    &\mathbb E\sum_{t=0}^{T-1}\left(S_{t+1}(\overline{V}_{t+1})^{-\tfrac{1}{\beta+1}}|a_{t}|^{\tfrac{\beta+2}{\beta+1}} +\lambda_t(X_{t+1}-X_t-a_{t})\right)\ ,\\
    &=\mathbb E\left[\sum_{t=0}^{T-1}\left(S_{t+1}(\overline{V}_{t+1})^{-\tfrac{1}{\beta+1}}|a_{t}|^{\tfrac{\beta+2}{\beta+1}} -(\lambda_{t+1}-\lambda_t)X_{t+1}-\lambda_ta_{t}\right)+\lambda_TX_T-\lambda_0X_0\right]\ ,
\end{align*}
with terminal condition $X_T=0$, initial condition $X_0=x$, and where $\lambda_{t}$ is an $\mathcal F_{t}$-adapted Lagrange multiplier process. First order conditions in $a_{t}$ and $X_{t+1}$ yield the following co-state equations,
\begin{align*}
    \frac{\beta+2}{\beta+1}\mathbb E_{t}\left[\hbox{sign}(a_{t})S_{t+1}\left(\frac{|a_{t}|}{\overline{V}_{t+1}}\right)^{\frac{1}{\beta+1}}\right] -\lambda_{t} &= 0\quad\hbox{for $t=0,1,2\dots,T-1$}\\
    \lambda_{t}-\mathbb E_{t}\lambda_{t+1}&=0\quad\hbox{for $t=0,1,2\dots,T-2$}\ ,
\end{align*}
with $\mathbb E_{t}$ denoting expectation conditional on $\mathcal F_{t}$. For $x>0$ the optimal policy is the VWAP strategy,
\begin{align*}
    a_t&= -\frac{x\overline{V}_{t+1}}{\sum_{t=1}^T\overline{V_t}}\\
    \lambda_{t}&=-\frac{\beta+2}{\beta+1}\left(\frac{x}{\sum_{t=1}^T\overline{V_t}}\right)^{\frac{1}{\beta+1}}\mathbb E_{t}S_{t+1}\ ,
\end{align*}
where the martingale property $S_t=\mathbb E_{t}S_{t+1}$ ensures that $\lambda_{t}=\mathbb E_{t}\lambda_{t+1}$. Furthermore, this optimal policy is the TWAP strategy if $\overline{V_t}$ is constant in $t$. The analogous proof holds for $x<0$. 
\end{proof}

In \cite{kato2015vwap} there are results similar to Proposition \ref{prop:optimal_deterministic}, and also a theorem suggesting that under certain Markovian assumptions, a deterministic VWAP is the optimal $\mathcal F_t$-adapted policy for stochastic volume. The following proposition shows how the VWAP strategy in \ref{eq:VWAPstrat} can be optimal under certain assumptions on $S_t$ and $V_t$.

\begin{prop}[Stochastic Volume]
    \label{prop:optimal_stochastic}
     Define
    \[M_t = \frac{\left(\mathbb EV_t^{-\frac{1}{\beta+1}}\right)^{-1}}{V_t^{\frac{1}{\beta+1}}}\ ,\] 
    and assume $M_t$ is a martingale with respect to the filtration $(\mathcal F_t)_{t=0,1,2,\dots}$. Also assume that $S_t$ is a martingale with respect to the filtration $(\mathcal F_t)_{t=0,1,2,\dots}$, independent of $M_t$. Then the VWAP strategy \eqref{eq:VWAPstrat} is optimal. 
\end{prop}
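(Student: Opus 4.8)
The plan is to mirror the Lagrangian argument used for Proposition~\ref{prop:optimal_deterministic}, the essential new ingredient being that the random factor $V_{t+1}^{-1/(\beta+1)}$ can no longer be pulled out of the conditional expectation and must instead be absorbed into the martingale $M_t$. First I would form the same Lagrangian as before, namely $\mathbb E\sum_{t=0}^{T-1}\big(S_{t+1}V_{t+1}^{-1/(\beta+1)}|a_t|^{(\beta+2)/(\beta+1)}+\lambda_t(X_{t+1}-X_t-a_t)\big)$ with an $\mathcal F_t$-adapted multiplier $\lambda_t$, terminal condition $X_T=0$, and $X_0=x$. Since each $a_t$ is $\mathcal F_t$-measurable, differentiating in $a_t$ and in $X_{t+1}$ yields the co-state equations
$$\frac{\beta+2}{\beta+1}\,\hbox{sign}(a_t)\,|a_t|^{1/(\beta+1)}\,\mathbb E_t\!\left[S_{t+1}V_{t+1}^{-1/(\beta+1)}\right]=\lambda_t,\qquad \lambda_t=\mathbb E_t\lambda_{t+1}.$$

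The key computation is to evaluate $\mathbb E_t[S_{t+1}V_{t+1}^{-1/(\beta+1)}]$. Rewriting the definition of $M_t$ gives the identity $V_{t+1}^{-1/(\beta+1)}=M_{t+1}\,\overline{V}_{t+1}^{-1/(\beta+1)}$ with $\overline V_{t+1}$ deterministic, so that $\mathbb E_t[S_{t+1}V_{t+1}^{-1/(\beta+1)}]=\overline V_{t+1}^{-1/(\beta+1)}\,\mathbb E_t[S_{t+1}M_{t+1}]$. Using independence of the processes $S$ and $M$ to factor the conditional expectation, together with the two martingale hypotheses $\mathbb E_tS_{t+1}=S_t$ and $\mathbb E_tM_{t+1}=M_t$, this becomes $\overline V_{t+1}^{-1/(\beta+1)}S_tM_t$. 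Substituting the VWAP candidate $a_t=-x\overline V_{t+1}/\sum_{s=1}^T\overline V_s$ (for $x>0$, so $\hbox{sign}(a_t)=-1$), the product $|a_t|^{1/(\beta+1)}\overline V_{t+1}^{-1/(\beta+1)}$ collapses to the $t$-independent constant $(x/\sum_s\overline V_s)^{1/(\beta+1)}$, and the first co-state equation then forces $\lambda_t=-\tfrac{\beta+2}{\beta+1}\,(x/\sum_s\overline V_s)^{1/(\beta+1)}S_tM_t$. Finally I would verify that this $\lambda_t$ satisfies the second co-state equation: by the same factorization and martingale properties $\mathbb E_t[S_{t+1}M_{t+1}]=S_tM_t$, so $S_tM_t$ is itself a martingale and $\mathbb E_t\lambda_{t+1}=\lambda_t$ holds.

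Because the loss is convex in $|a_t|$ and the constraints are affine, the first-order conditions are sufficient for optimality, so exhibiting the multiplier above proves the VWAP strategy solves \eqref{eq:optimal_exec}; the case $x<0$ is symmetric. The step I expect to require the most care is the factorization $\mathbb E_t[S_{t+1}M_{t+1}]=\mathbb E_t[S_{t+1}]\,\mathbb E_t[M_{t+1}]$: the stated hypothesis is independence of $S_t$ and $M_t$, and one must argue that independence of the two \emph{processes} (not merely of the marginals at a single time) delivers the conditional factorization given $\mathcal F_t$. Once that is granted, everything else is the same bookkeeping as in Proposition~\ref{prop:optimal_deterministic}, with the martingale $M_t$ playing the role that the deterministic constant played there.
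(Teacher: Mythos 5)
Your proposal is correct and follows essentially the same route as the paper's own proof: the identical Lagrangian and co-state equations, substitution of the VWAP candidate, and verification that $\lambda_t$ is proportional to the martingale $S_tM_t$ via independence and the two martingale hypotheses. The only differences are cosmetic (you pull the $\mathcal F_t$-measurable factor $|a_t|^{1/(\beta+1)}$ out of the conditional expectation before substituting, and you explicitly flag the convexity-based sufficiency and the process-versus-marginal independence subtlety, both of which the paper leaves implicit).
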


\begin{proof}
    Taking the Lagrangian approach as we did in the proof of Proposition \ref{prop:optimal_deterministic}, we arrive at the following co-state equations,
    \begin{align*}
    \frac{\beta+2}{\beta+1}\mathbb E_{t}\left[\hbox{sign}(a_{t})S_{t+1}\left(\frac{|a_{t}|}{V_{t+1}^i}\right)^{\frac{1}{\beta+1}}\right] -\lambda_{t} &= 0\\
    \lambda_{t}-\mathbb E_{t}\lambda_{t+1}&=0\ .
\end{align*}
For $x>0$ we make the ansatz that $a_t<0$, which gives us
\[\lambda_t=-\frac{\beta+2}{\beta+1}\mathbb E_{t}\left[S_{t+1}\left(\frac{|a_{t}|}{V_{t+1}}\right)^{\frac{1}{\beta+1}}\right]\ ,\]
and if we insert the VWAP strategy of \eqref{eq:VWAPstrat} we see that
\begin{align*}
    &\mathbb E_{t}\left[S_{t+1}\left(\frac{|a_{t}|}{V_{t+1}}\right)^{\frac{1}{\beta+1}}\right]\\
    &=S_t\mathbb E_{t}\left[\left(\frac{|a_{t}|}{V_{t+1}}\right)^{\frac{1}{\beta+1}}\right]\\
    &=\left(\frac{x}{\mathcal K}\right)^{\frac{1}{\beta+1}}S_t\mathbb E_{t}\left[\left(\frac{\left(\mathbb E(V_{t+1})^{-\frac{1}{\beta+1}}\right)^{-(\beta+1)}}{V_{t+1}}\right)^{\frac{1}{\beta+1}}\right]\\
    & = \left(\frac{x}{\mathcal K}\right)^{\frac{1}{\beta+1}}S_t\mathbb E_tM_{t+1}\\
    &=\left(\frac{x}{\mathcal K}\right)^{\frac{1}{\beta+1}}S_tM_t\ ,
\end{align*}
where $\mathcal K$ is the denominator of $a_t$ in \eqref{eq:VWAPstrat}. Thus, when VWAP strategy \eqref{eq:VWAPstrat} is used we have $\lambda_t=-\left(\frac{x}{\mathcal K}\right)^{\frac{1}{\beta+1}}\frac{\beta+2}{\beta+1}S_tM_t=-\left(\frac{x}{\mathcal K}\right)^{\frac{1}{\beta+1}}\frac{\beta+2}{\beta+1}\mathbb E_tS_{t+1}M_{t+1}=\mathbb E_t\lambda_{t+1}$, thereby confirming that it is an optimal policy. The analogous proof holds for $x<0$.
\end{proof}
An example for martingale $M_t$ in Proposition \ref{prop:optimal_stochastic} is log-normal volume (same as the example given in \cite{kato2015vwap}), $\log(V_{t+1}/V_t) = \mu_t + \sigma_tZ_{t+1}$, where $(Z_t)_{t=1,2,\dots}$ is a sequence of standard normals independent of the past, and where $\mu_t$ and $\sigma_t$ are deterministic functions of $t$ that are calibrated so that $V_t$ adheres to a U-shape over the course of a trading day; it is straightforward to check if parameters $\mu_t$ and $\sigma_t$ allow for $M_t$ to be a martingale. At this point however, rather than pursue specification of the underlying stochastic processes, we instead choose to leave the data distribution unspecified and then proceed to train an LSTM to trade optimally based on historical observations of financial data.
 
\section{LSTM Execution Policy -- Experimental Setup}

A policy approximation for the optimal solution of \eqref{eq:optimal_exec} can be obtained by training a neural network on historical market data. Our approach is to train an LSTM neural network to minimize the objective in \eqref{eq:optimal_exec}, and then compare with the TWAP strategy of \eqref{eq:TWAPstrat} and the VWAP strategy of \eqref{eq:VWAPstrat}. The choice of LSTM rather than a convolutional neural network (CNN) or a recurrent neural network (RNN) is based on the following two considerations. Firstly, the problem in \eqref{eq:optimal_exec} is time-dependent, requiring the neural network to memorize prior information. A CNN is static and thus cannot memorize prior information for use in backpropagation whereas a RNN can memorize prior information, but suffers from the gradient vanishing problem \cite{hochreiter1998vanishing}. The LSTM is able to handle both of these considerations.

The backtests we conduct will have a fixed number of suborders and a fixed submission time. We will submit suborders every 5 minutes for a total of 78 total suborders over the 390 minutes of the trading day. Each of the strategies we test, namely the TWAP, VWAP and LSTM strategies, will execute in the same 5 minute intervals, thus ensuring a fair comparison. To be as realistic as possible, we also assume that the LSTM lags one minute behind the real-time market (i.e., the input of the LSTM network only includes the data up to and including the prior minute when the suborder is executed) to prevent the LSTM from having any foresight bias.

\begin{figure}
    \centering
    \includegraphics[scale=0.5]{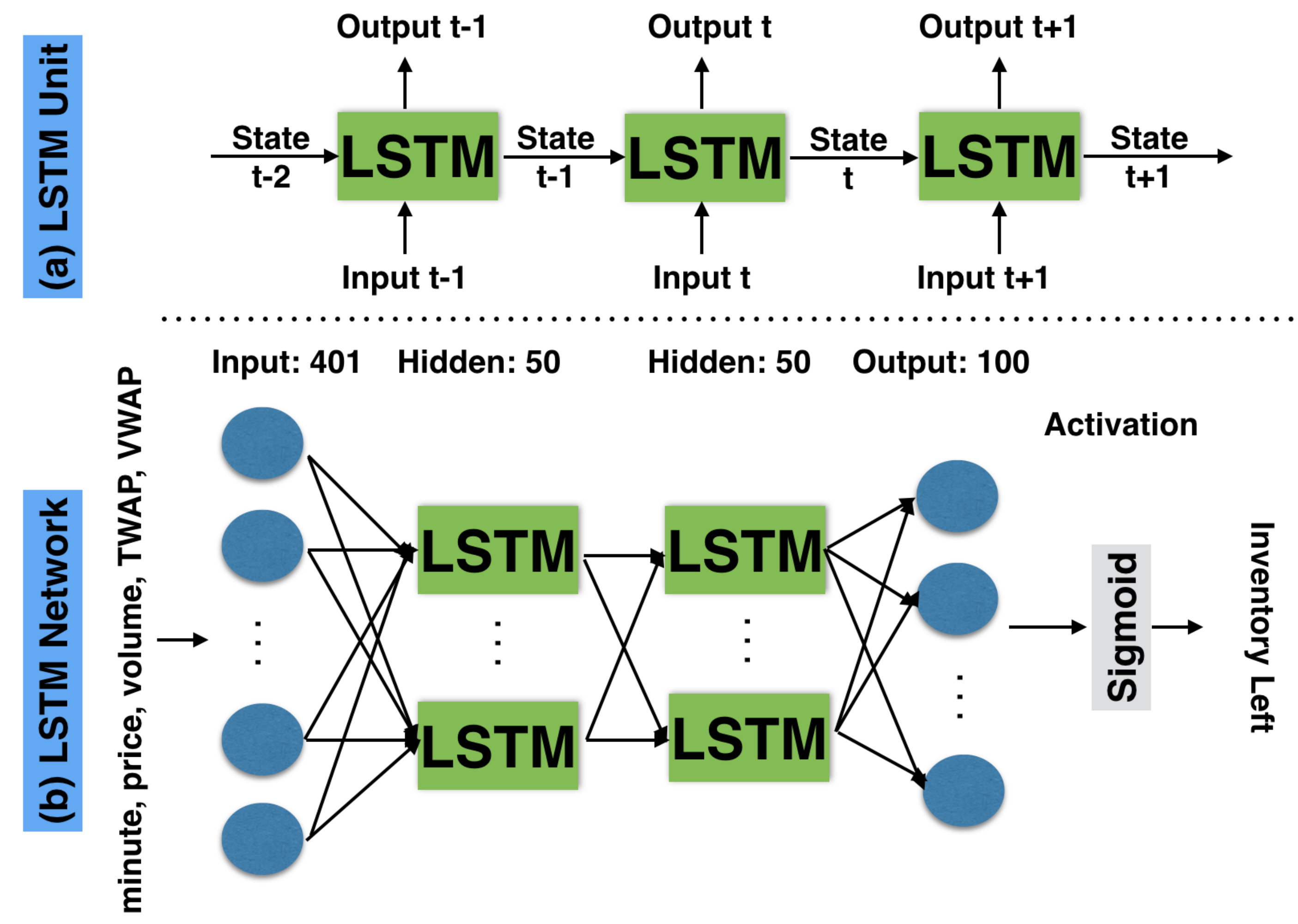}
    \caption{(a) The structure of LSTM unit. (b) The structure of the used LSTM network.}
    \label{fig:structure}
\end{figure}

Fig.~\ref{fig:structure} (a) shows the structure of a single LSTM unit. We refer to the internal parameters of the LSTM unit as the state.  At a particular time $t$, the LSTM unit updates its state to $state_t$  using the old state, $state_{t-1}$, and the new input, $input_t$. The LSTM unit also generates an output, $output_t$. While $state_t$ is then used to update $state_{t+1}$, $output_t$ is used for other calculation purposes. Fig.~\ref{fig:structure} (b) shows the structure of the LSTM network used: it has two LSTM layers with 50 LSTM units in each. With fewer or smaller LSTM layers the network tends to underfit, whereas larger,  more complex LSTMs have comparable performance to that of our architecture but with increased computational complexity. The input has length 401, which comprises the current minute, the S\&P 100 stocks' prices, their volumes, and the remaining inventories in each stock under both TWAP and VWAP strategies. By including in the input, the current inventories remaining under TWAP and VWAP strategies, the LSTM strategy performs no worse than TWAP or VWAP as the LSTM can simply replicate them. The LSTM state also contains the current level of inventory in stock $i$, and after passing the results through a sigmoid activation function, the network outputs the updated inventory remaining in each stock $i$ for the current time. The total number of network parameters is 116,100.

\subsection{Data and Parameter Estimation}

Our data was obtained from FirstRate Data and consists of minute-by-minute prices and volumes from January 2, 2020, through July 1, 2022, for the S\&P100 stocks as listed on March 21, 2022. The data was split into nine groups for training and testing the LSTM networks, as shown in Table~\ref{tab:dataset}. The S\&P100 contains the largest 100 companies in the U.S. stock market by market capitalization. The limit order book for each of these stocks is extremely deep at all times, meaning that there is plenty of liquidity and it is very unlikely that a suborder will not get filled. For these liquid stocks, the simulated performance of the LSTM policy will be a realistic characterization of how it will perform in real-life trading.

\begin{table}
    \centering
    \caption{The datasets and their sizes.}
    \begin{tabular}{|c|cc|cc|}
    \hline
        Fold & Training Data & Days & Testing Data  & Days \\
         \hline
       1 & From 2020-01-02 To 2020-03-27 & 60 & From 2020-03-30 To 2020-06-03 & 45 \\
       2 & From 2020-03-30 To 2020-06-23  & 60 & From 2020-06-24 To 2020-08-27 & 45 \\
       3 & From 2020-06-24 To 2020-09-17  & 60 & From 2020-09-18 To 2020-11-20 & 45 \\
       4 & From 2020-09-18 To 2020-12-11  & 60 & From 2020-12-14 To 2021-02-19 & 45 \\
       5 & From 2021-01-04 To 2021-03-30  & 60 & From 2021-03-31 To 2021-06-04 & 45 \\
       6 & From 2021-03-31 To 2021-06-24  & 60 & From 2021-06-25 To 2021-08-30 & 45 \\
       7 & From 2021-06-25 To 2021-09-20  & 60 & From 2021-09-21 To 2022-11-23 & 45 \\
       8 & From 2021-09-21 To 2021-12-14  & 60 & From 2021-12-15 To 2022-02-18 & 45 \\
       9 & From 2022-01-03 To 2022-03-29  & 60 & From 2022-03-30 To 2022-06-03 & 45 \\
       \hline
    \end{tabular}
    \label{tab:dataset}
\end{table}

For the power law in \eqref{eq:rho}, we take $\beta=.67$ so that the impact in \eqref{eq:impactedPrice} has a power of $.6$ as suggested by \cite{almgren2005direct}. However, we will conduct backtests both with constant $\beta$ and stochastically fluctuating $\beta$, the latter being a more realistic description of real-life order books. We set $\epsilon$ so that the transaction cost equals 0.01\%-0.02\% (i.e., 1-2 bps) of the value traded, which is realistic for S\&P100 stocks. For example, when trading 1 million shares, an appropriate $\epsilon$ would be 0.003. Then, $C_{\epsilon, \beta}$ is calculated by:
\begin{align}
\label{eq:liquid_ratio}
    C_{\epsilon, \beta} = \tfrac{1}{\beta+2} (\epsilon(\beta+1))^{\tfrac{\beta+2}{\beta+1}} \approx 7.87\times 10^{-5}.
\end{align}

For $i=1,2,\dots,100$ let $S_t^i$ and $V_t^i$ denote the price and volume for the $i^{th}$ stock in the dataset. The $\sigma$-algebra $\mathcal F_t$ is generated by $\bigcup_i\{(S_u^i,V_u^i)_{u=0,1,\dots,t}\}$. We train the neural network separately to optimally execute for each individual stock, but the inputs to the network include the vectors of all prices and volumes at time $t$, which we denote as $\mathbf{S}_t=(S_t^1,S_t^2,\dots,S_t^{100})$ and $\mathbf{V}_t=(V_t^1,V_t^2,\dots,V_t^{100})$, respectively.

\subsection{Algorithm \& LSTM Training}

Because suborders are executed every five minutes, the total number of executions during the trading day is $390/5=78$. The output of the LSTM is $X_t^i$ which represents the remaining inventory for stock $i$ at time $t$. We train the LSTM network on each of the nine folds for each of the S\&P100 stocks. The total number of trained LSTM networks for all nine folds is $100\times 9=900$. The loss function used to train the LSTM network for stock $i$ is the following empirical approximation of  \eqref{eq:optimal_exec}, 
\begin{equation}
L^i=C_{\epsilon, \beta}\sum_{\ell=1}^{78}S_{5\ell}^i(V_{5\ell}^i)^{-\frac{1}{\beta+1}}|a_{5\ell-1}^i|^{\frac{\beta+2}{\beta+1}} = C_{\epsilon, \beta}\sum_{\ell=1}^{78}S_{5\ell}^i(V_{5\ell}^i)^{-\frac{1}{\beta+1}}|X_{5\ell}^i - X_{5\ell - 5}^i|^{\frac{\beta+2}{\beta+1}}  \ . 
\label{eq:transactioncosts}
\end{equation}
Define the LSTM network weights as $w$, then the training problem becomes
\begin{align*}
    w^* = \argmin_w L^i(w) = \argmin_w C_{\epsilon, \beta}\sum_{\ell=1}^{78}S_{5\ell}^i(V_{5\ell}^i)^{-\frac{1}{\beta+1}}|X_{5\ell}^i(w) - X_{5\ell - 5}^i(w)|^{\frac{\beta+2}{\beta+1}}.
\end{align*}

\begin{algorithm} 
\captionsetup{width=.9\linewidth}
 \caption{Training of LSTM Architecture for Optimal Execution Every 5 Minutes In A 390-Minute Trading Day for Stock $i$} 
 \label{alg:LSTMtrain}
 \begin{algorithmic} 
     \STATE --Initialize parameters of LSTM units $w$
     \FOR{k = 1 to NUM\_EPOCH}
        \STATE $X_0^i=x_0^i$, $L^i=0$, $t=1$, $\beta = 0.67$, $lr= 0.001$, $h_0^i = $ None
     	\WHILE{$t<390$}
        \STATE $X_t^i,h_t^i = LSTM(h_{t-1}^i, t, \mathbf{S}_{t},\mathbf{V}_{t}, \mathbf{X}_t^V, \mathbf{X}_t^T)$
     	    \IF{mod(t,5)=0}
     	        \STATE $L^i\mathrel{+}= C_{\epsilon, \beta} S_t^i(V_t^i)^{-\tfrac{1}{\beta+1}}|X_t^i-X_{t-5}^i|^{\tfrac{\beta+2}{\beta+1}} $
     	    \ENDIF
             \STATE $t += 1$
	    \ENDWHILE
     \STATE \#\#\#\# Close all the positions \#\#\#\#
     \STATE $X_{390}^i=0$
	    \STATE $L^i\mathrel{+}= C_{\epsilon, \beta} S_{390}^i(V_{390}^i)^{-\tfrac{1}{\beta+1}}|X_{390}^i-X_{385}^i|^{\tfrac{\beta+2}{\beta+1}}$\\
	    \#\#\#\# Update the LSTM weights \#\#\#\#
	    \STATE $w = $ Adam($L^i$, $lr$, $w$)
     \ENDFOR
 \end{algorithmic}
 \end{algorithm}

Algorithm \ref{alg:LSTMtrain} shows the training procedure for stock $i$'s LSTM network. For each training we loop for 10,000 epochs, which allows us to train a single LSTM in less than 20 minutes using GPUs; using the CPU, the runtime is around 10 hours. We initialize the LSTM state $h^i$ as {\bf None}. We also initialize $h^i$ with random numbers but  observe little difference. Adam is used as the optimizer with a learning rate ($lr$) of $0.001$. The initial inventory is 5\% of the stock's average daily volume, that is $x_0^i= .05\times A^i$ where 
\[A^i =  \hbox{sample mean of daily volume for stock $i$.}\] 
We also train for a fixed amount of shares for each stock, i.e., $x_0^i\equiv 10^6$. Note that although the time increments are 5 minutes, the data in between trade times is still seen by the LSTM, which means the strategy makes full use of the available information. The trading day has exactly 390 minutes, and so the first trade occurs at time $t=5$, and the final trade occurs at minute 390 when the market closes. The LSTM networks are trained on nine folds, each comprising 60 days of one-minute data, as described in Table~\ref{tab:dataset} for the S\&P100 stocks. We consider the days to be independent of each other, therefore, the shape of the LSTM training input is (60, 390, 401). The ratio of training data size ($60\times 390\times 401 = 9,383,400$) to trained parameter size (116,100) is over 80. Therefore, over-fitting is highly unlikely to occur. 

\begin{figure}
     \centering
     \includegraphics[width=\textwidth]{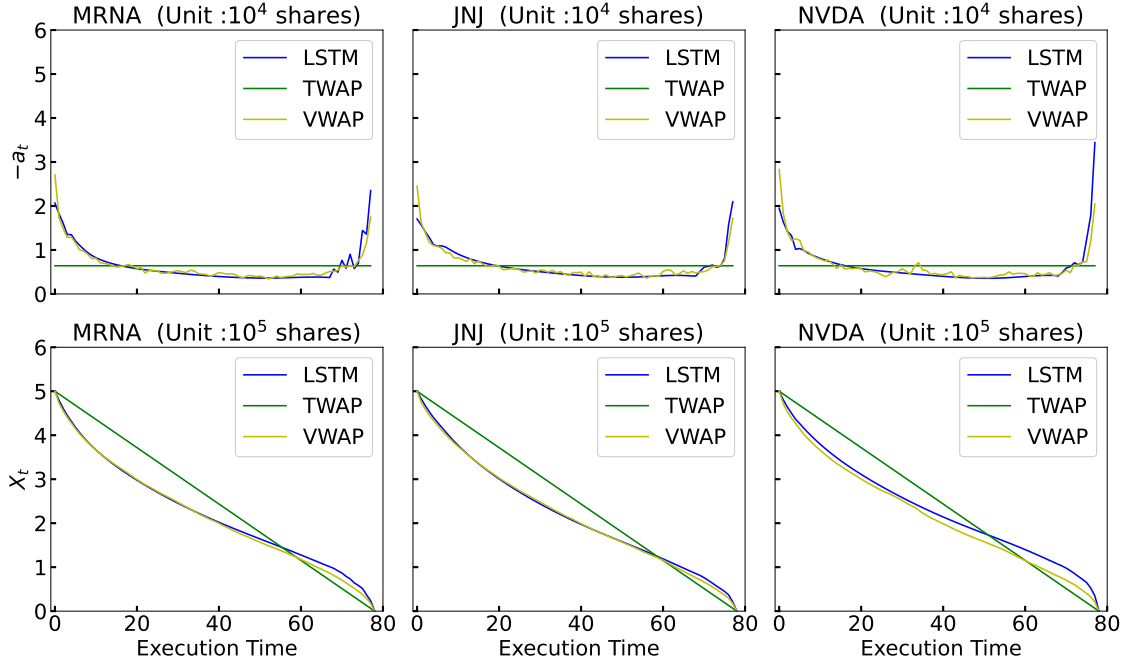}
     \caption{The first row shows $-a_t$ and the second row shows $X_t$ for selected stocks.}
     \label{fig:execution}
 \end{figure}

 The $\mathbf{X}_t^V$ and $\mathbf{X}_t^T$ in Alg.~\ref{alg:LSTMtrain} represent the remaining inventories for the S\&P100 stocks by using VWAP and TWAP strategies at the minute $t$, i.e.,
 \begin{align}
        \nonumber
      \mathbf{X}_t^T &= \left(x_0^1 - \frac{t}{T}x_0^1,\dots,x_0^{100} - \frac{t}{T}x_0^{100}\right) \\
      \label{eq:benchmark}
      \mathbf{X}_t^V &= \left(x_0^1 - \frac{ \sum_{j=1}^t\overline V_j^1}{\sum_{t=1}^T\overline V_t^1}x_0^1,\dots, x_0^{100} - \frac{ \sum_{j=1}^t\overline V_j^{100}}{\sum_{t=1}^T\overline V_t^{100}}x_0^{100}\right)\ .
 \end{align} 
 The TWAP strategy does not require the estimation of any parameters. As for the VWAP strategy, we utilized the 60 days' training volume data to estimate $\overline V_t^i$ for each fold and stock $i$ as described in \eqref{eq:VWAPstrat}, which is $\overline{V_t^i}=\left(\mathbb E(V_t^i)^{-\frac{1}{\beta+1}}\right)^{-(\beta+1)}$.  As an example, Fig.~\ref{fig:execution} shows the evolution of the actions and remaining inventories for the three strategies for three different stocks. During testing, the LSTM strategy is compared with the same TWAP and VWAP strategies in \eqref{eq:benchmark}.

\section{LSTM Execution Policy -- Experimental Results}
\subsection{Evaluation Metrics}
The metric used to compare our LSTM strategy with VWAP and TWAP strategies is the transaction cost $L^i$  as in \eqref{eq:transactioncosts}.
A smaller $L^i$  means a better performance of the selected execution strategy for stock $i$. 
We consider the following scenarios: the noiseless order book case with $x_0^i=.05\times A^i$ and the fixed amount of shares $x_0^i=10^6$, and the noisy order book case (i.e., $\beta$ is stochastic) with $x_0^i=.05\times A^i$ and $x_0^i=10^6$. \textbf{From here onward, we drop the super-script $i$ for notational simplicity except in places where it is necessary to show our results.}

\subsection{Noiseless Order Book Case}

We test the trained LSTM models on the datasets shown in Table~\ref{tab:dataset}. 
The parameter $\epsilon$ in \eqref{eq:liquid_ratio} is set to $0.006$ for $x_0= .05\times A$ and is set to $0.003$ for $x_0=10^6$ so that the transaction cost is approximately 1-2bps of the traded equity values.  
The first row in Fig.~\ref{fig:noiseless_foldcost} shows the average daily transaction cost for each fold. On average over the nine folds, the daily transaction cost  is \$811,433  for the LSTM strategy, \$843,248 for the VWAP strategy, and \$939,695  for the TWAP strategy for the case $x_0= .05\times A$. For the fixed initial shares case, the daily transaction cost is \$5,748,601 for the LSTM,  \$6,135,958 for the VWAP, and \$6,760,903 for the TWAP. Fig.~\ref{fig:noiseless_foldloss} shows how the daily transaction cost evolves with execution time for the case $x_0= .05\times A$. The LSTM strategy tends to incur large transaction costs towards the end of the trading session, whereas the VWAP strategy tends to incur large transaction costs in the morning and the TWAP strategy incurs  transaction costs relatively consistently throughout the day. However, the LSTM has an overall lower transaction cost. Similar execution behavior for the three strategies is observed in the fixed initial shares case.

\begin{figure}
    \centering
    \includegraphics[width=\textwidth]{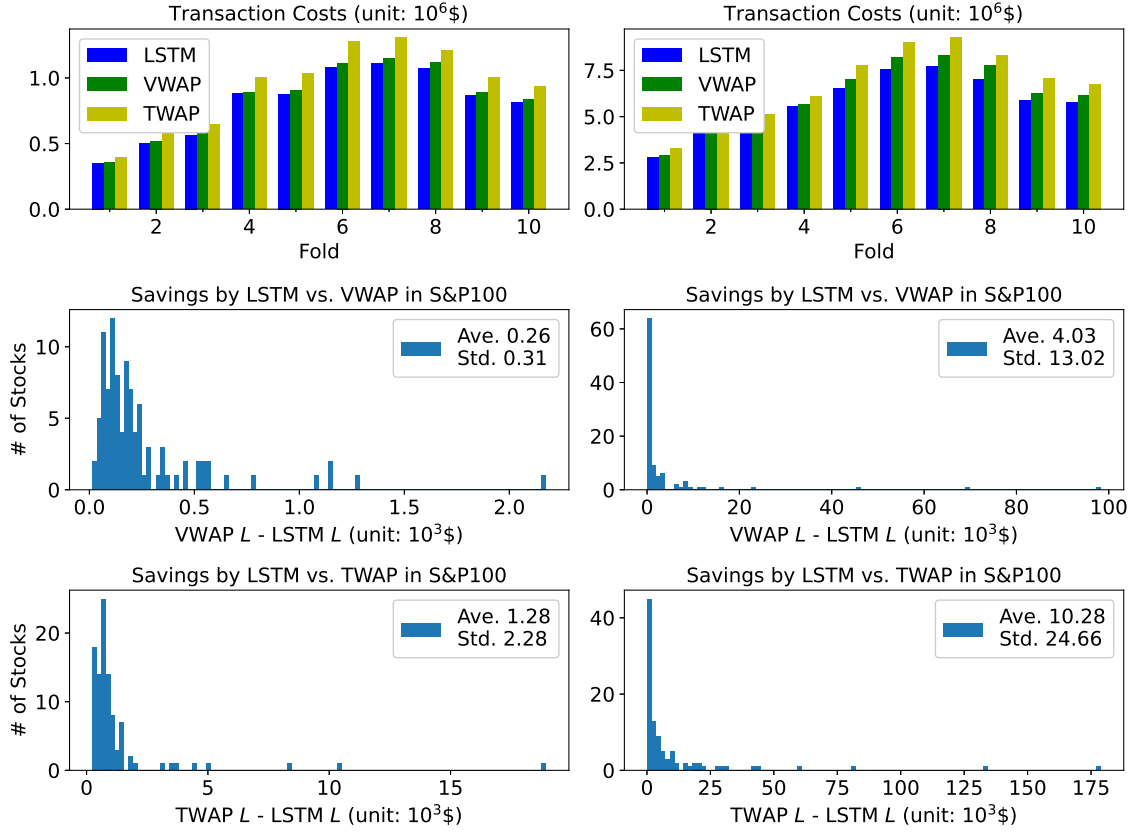}
    \caption{Performance of LSTM in noiseless order book case. Left: $x_0=0.05A$. Right: $x_0=10^6$.}
    \label{fig:noiseless_foldcost}
\end{figure}

\begin{figure}
    \centering
    \includegraphics[scale=0.3]{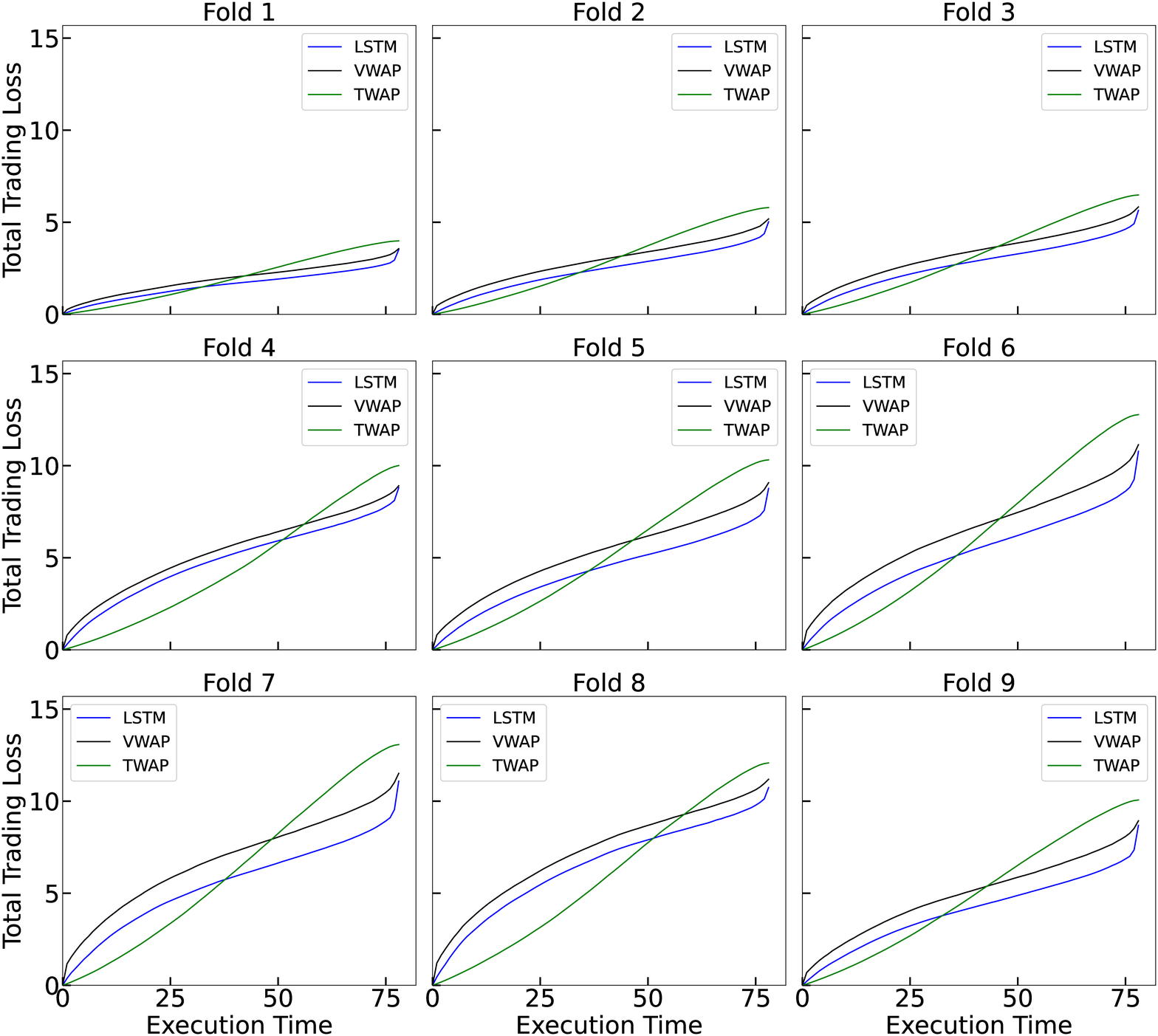}
    \caption{Transaction cost with execution time for noiseless order book case with $x_0=.05\times A$. The unit for $y$-axis is $10^5$ dollars.}
    \label{fig:noiseless_foldloss}
\end{figure}

The second row in Fig.~\ref{fig:noiseless_foldcost} shows the transaction costs saved by using the LSTM strategy compared to the VWAP strategy for each stock over the nine testing folds. The LSTM strategy outperforms the VWAP by having a smaller transaction cost for all the stocks in $x_0=.05\times A$ case and the average saving by LSTM is \$260 for each stock. As for the fixed initial shares case, the average savings become \$4,030. The third row shows the transaction costs saved by using the LSTM strategy compared to the TWAP strategy for each stock over the nine testing folds. Similarly, the LSTM is able to save \$1,280 in the percentage of daily volume case and \$10,280 in the fixed initial shares case. 

Tables~\ref{tab:noiseless_VWAP} and \ref{tab:noiseless_TWAP} list the top ten stock cases for which the LSTM saves the most and the bottom for which LSTM saves the least compared to VWAP and TWAP strategies, respectively, for the $x_0=.05\times A$ case. Most of the stocks in which the LSTM saves most are large capitalization tech companies. The median savings are approximately \$170  if LSTM is used rather than the VWAP strategy, and \$710 if LSTM is used rather than the TWAP strategy. Tables~\ref{tab:noiseless_VWAP_fixed} and \ref{tab:noiseless_TWAP_fixed} list the top and bottom ten stocks in which the LSTM saves the most/least compared to VWAP and TWAP strategies, respectively, for the $x_0=10^6$ case. A median saving of approximately \$550 is achieved by using LSTM rather than VWAP strategy, and \$2,360 by using LSTM rather than TWAP strategy.

\begin{table}
    \centering
    \caption{The ten stocks with highest and the ten with lowest savings under LSTM compared with VWAP strategy for noiseless order book case with $x_0=.05\times A$ (5\% of average volume), and the stock with the median saving.}
    \begin{tabular}{|c|ccccc|}
    \hline
      Ticker & Daily &  Equity Value  & VWAP (bps)  & LSTM (bps) & Savings (bps)  \\
     $\downarrow$ & Volume & of Traded (\$) &  (unit: \$$10^3$) & (unit: \$$10^3$) &   (unit: \$$10^3$) \\
         \hline
       AMZN & 42.65 M & 329.26 M & 55.98 (1.70) & 53.81 (1.63) & 2.18 (0.07) \\
TSLA & 28.20 M & 909.89 M & 162.77 (1.79) & 161.49 (1.77) & 1.29 (0.01) \\
MSFT & 20.61 M & 256.82 M & 35.02 (1.36) & 33.87 (1.32) & 1.16 (0.04) \\
AAPL & 87.88 M & 570.61 M & 77.39 (1.36) & 76.24 (1.34) & 1.14 (0.02) \\
GOOG & 0.62 M & 66.42 M & 12.20 (1.84) & 11.11 (1.67) & 1.08 (0.16) \\
NFLX & 3.85 M & 93.59 M & 18.52 (1.98) & 17.73 (1.89) & 0.78 (0.08) \\
UNH & 2.07 M & 39.53 M & 6.82 (1.73) & 6.17 (1.56) & 0.65 (0.16) \\
TMO & 0.93 M & 22.87 M & 4.33 (1.89) & 3.75 (1.64) & 0.57 (0.25) \\
NVDA & 32.57 M & 274.19 M & 39.69 (1.45) & 39.12 (1.43) & 0.56 (0.02) \\
META & 15.87 M & 219.53 M & 31.12 (1.42) & 30.59 (1.39) & 0.54 (0.02) \\
\hline
 $\vdots$ & \multicolumn{5}{c|}{$\cdots$} \\
 PM & 3.24 M & 13.50 M & 2.09 (1.55) & 1.93 (1.43) & 0.17 (0.12) \\
 $\vdots$ & \multicolumn{5}{c|}{$\cdots$} \\
       \hline
       AIG & 4.07 M & 8.88 M & 1.43 (1.61) & 1.37 (1.54) & 0.06 (0.07) \\
DUK & 1.93 M & 8.90 M & 1.33 (1.50) & 1.27 (1.43) & 0.06 (0.07)
 \\
 COP & 6.52 M & 18.26 M & 2.61 (1.43) & 2.55 (1.40) & 0.06 (0.03) \\
SO & 3.12 M & 9.28 M & 1.32 (1.42) & 1.27 (1.37) & 0.05 (0.06) \\
CL & 2.94 M & 11.28 M & 1.61 (1.43) & 1.56 (1.38) & 0.05 (0.05) \\
EXC & 5.72 M & 9.33 M & 1.35 (1.45) & 1.30 (1.40) & 0.04 (0.05) \\
WBA & 4.62 M & 10.07 M & 1.49 (1.48) & 1.45 (1.44) & 0.04 (0.04) \\
PG & 4.69 M & 31.84 M & 4.29 (1.35) & 4.25 (1.34) & 0.04 (0.01) \\
DD & 3.24 M & 10.72 M & 1.90 (1.77) & 1.87 (1.75) & 0.03 (0.02) \\
BK & 3.88 M & 8.69 M & 1.29 (1.48) & 1.27 (1.46) & 0.01 (0.02) \\
       \hline
    \end{tabular}
    \label{tab:noiseless_VWAP}
\end{table}

\begin{table}
    \centering
    \caption{The ten stocks with highest and the ten with lowest savings under LSTM compared with TWAP strategy for noiseless order book case with $x_0=.05\times A$ (5\% of average volume), and the stock with the median saving.}
    \begin{tabular}{|c|ccccc|}
    \hline
      Ticker & Daily & Equity Value  &  TWAP (bps)  & LSTM (bps) & Savings (bps)  \\
     $\downarrow$ & Volume & Traded (\$) &  (unit: \$$10^3$) & (unit: \$$10^3$) &   (unit: \$$10^3$) \\
         \hline
       TSLA & 28.20 M & 909.89 M & 180.43 (1.98) & 161.49 (1.77) & 18.94 (0.21) \\
AMZN & 42.65 M & 329.26 M & 64.19 (1.95) & 53.81 (1.63) & 10.38 (0.32) \\
AAPL & 87.88 M & 570.61 M & 84.61 (1.48) & 76.24 (1.34) & 8.37 (0.15) \\
NVDA & 32.57 M & 274.19 M & 44.06 (1.61) & 39.12 (1.43) & 4.94 (0.18) \\
MSFT & 20.61 M & 256.82 M & 38.40 (1.50) & 33.87 (1.32) & 4.53 (0.18) \\
NFLX & 3.85 M & 93.59 M & 21.47 (2.29) & 17.73 (1.89) & 3.73 (0.40) \\
META & 15.87 M & 219.53 M & 34.09 (1.55) & 30.59 (1.39) & 3.50 (0.16) \\
BA & 14.42 M & 139.82 M & 26.20 (1.87) & 23.01 (1.65) & 3.19 (0.23) \\
GOOG & 0.62 M & 66.42 M & 13.14 (1.98) & 11.11 (1.67) & 2.02 (0.30) \\
PYPL & 7.05 M & 70.38 M & 13.17 (1.87) & 11.34 (1.61) & 1.83 (0.26) \\
\hline
 $\vdots$ & \multicolumn{5}{c|}{$\cdots$} \\
 ORCL & 8.61 M & 30.58 M & 5.21 (1.70) & 4.50 (1.47) & 0.71 (0.23) \\
  $\vdots$ & \multicolumn{5}{c|}{$\cdots$} \\
       \hline
       AIG & 4.07 M & 8.88 M & 1.71 (1.92) & 1.37 (1.54) & 0.34 (0.38) \\
EMR & 2.12 M & 8.54 M & 1.61 (1.89) & 1.28 (1.50) & 0.33 (0.39) \\
MET & 3.72 M & 9.63 M & 1.70 (1.76) & 1.38 (1.44) & 0.31 (0.32) \\
DUK & 1.93 M & 8.90 M & 1.58 (1.78) & 1.27 (1.43) & 0.31 (0.35) \\
SO & 3.12 M & 9.28 M & 1.57 (1.70) & 1.27 (1.37) & 0.31 (0.33) \\
CL & 2.94 M & 11.28 M & 1.86 (1.65) & 1.56 (1.38) & 0.29 (0.26) \\
DOW & 3.72 M & 9.69 M & 1.62 (1.67) & 1.36 (1.41) & 0.25 (0.26) \\
BKNG & 0.22 M & 22.54 M & 3.96 (1.75) & 3.70 (1.64) & 0.25 (0.11) \\
WBA & 4.62 M & 10.07 M & 1.70 (1.68) & 1.45 (1.44) & 0.25 (0.25) \\
BK & 3.88 M & 8.69 M & 1.51 (1.73) & 1.27 (1.46) & 0.23 (0.27) \\
       \hline
    \end{tabular}
    \label{tab:noiseless_TWAP}
\end{table}

\begin{table}
    \centering
    \caption{The ten stocks with highest and the ten with lowest savings under LSTM compared with VWAP strategy for noiseless order book case with $x_0=10^6$, and the stock with the median saving.}
    \begin{tabular}{|c|ccccc|}
    \hline
      Ticker & Daily &  Equity Value  & VWAP (bps)  & LSTM (bps) & Savings (bps)  \\
     $\downarrow$ & Volume & of Traded (\$) &  (unit: \$$10^3$) & (unit: \$$10^3$) &   (unit: \$$10^3$) \\
         \hline
       GOOG & 0.62 M & 2150.54 M & 1046.09 (4.86) & 947.77 (4.41) & 98.32 (0.46) \\
BKNG & 0.22 M & 2076.48 M & 1741.56 (8.39) & 1671.94 (8.05) & 69.62 (0.34) \\
BLK & 0.46 M & 707.64 M & 439.79 (6.21) & 394.23 (5.57) & 45.57 (0.64) \\
TMO & 2.98 M & 493.96 M & 194.35 (3.93) & 171.52 (3.47) & 22.82 (0.46) \\
AVGO & 1.08 M & 435.77 M & 149.75 (3.44) & 133.44 (3.06) & 16.32 (0.37) \\
CHTR & 0.61 M & 616.99 M & 289.98 (4.70) & 278.15 (4.51) & 11.83 (0.19) \\
ADBE & 1.43 M & 494.80 M & 140.58 (2.84) & 129.16 (2.61) & 11.42 (0.23) \\
LMT & 0.96 M & 365.99 M & 126.62 (3.46) & 117.34 (3.21) & 9.28 (0.25) \\
UNH & 2.89 M & 381.23 M & 84.43 (2.21) & 75.98 (1.99) & 8.45 (0.22) \\
ACN & 1.17 M & 276.23 M & 81.30 (2.94) & 72.95 (2.64) & 8.35 (0.30) \\
\hline
 $\vdots$ & \multicolumn{5}{c|}{$\cdots$} \\
 ABT & 4.05 M & 111.71 M & 14.13 (1.27) & 13.58 (1.22) & 0.55 (0.05) \\
 $\vdots$ & \multicolumn{5}{c|}{$\cdots$} \\
       \hline
       BA & 14.42 M & 193.90 M & 12.90 (0.67) & 12.82 (0.66) & 0.08 (0.00) \\
BK & 3.88 M & 44.78 M & 5.83 (1.30) & 5.75 (1.28) & 0.08 (0.02) \\
GM & 13.78 M & 44.10 M & 2.48 (0.56) & 2.43 (0.55) & 0.05 (0.01) \\
C & 18.34 M & 56.97 M & 2.74 (0.48) & 2.69 (0.47) & 0.05 (0.01) \\
AAPL & 87.88 M & 129.86 M & 2.40 (0.18) & 2.36 (0.18) & 0.04 (0.00) \\
WFC & 14.04 M & 37.81 M & 1.46 (0.39) & 1.43 (0.38) & 0.03 (0.01) \\
XOM & 27.13 M & 53.12 M & 2.25 (0.42) & 2.23 (0.42) & 0.03 (0.01) \\
PFE & 25.04 M & 40.06 M & 1.59 (0.40) & 1.57 (0.39) & 0.02 (0.01) \\
BAC & 47.20 M & 34.29 M & 0.95 (0.28) & 0.93 (0.27) & 0.02 (0.01) \\
F & 70.81 M & 11.81 M & 0.26 (0.22) & 0.26 (0.22) & 0.00 (0.00) \\
       \hline
    \end{tabular}
    \label{tab:noiseless_VWAP_fixed}
\end{table}

\begin{table}
    \centering
     \caption{The ten stocks with highest and the ten with lowest savings under LSTM compared with TWAP strategy for noiseless order book case with $x_0=10^6$, and the stock with the median saving.}
    \begin{tabular}{|c|ccccc|}
    \hline
      Ticker & Daily & Equity Value  &  TWAP (bps)  & LSTM (bps) & Savings (bps)  \\
     $\downarrow$ & Volume & Traded (\$) &  (unit: \$$10^3$) & (unit: \$$10^3$) &   (unit: \$$10^3$) \\
         \hline
       GOOG & 0.62 M & 2150.54 M & 1126.70 (5.24) & 947.77 (4.41) & 178.93 (0.83) \\
BKNG & 0.22 M & 2076.48 M & 1805.40 (8.69) & 1671.94 (8.05) & 133.46 (0.64) \\
BLK & 0.46 M & 707.64 M & 476.08 (6.73) & 394.23 (5.57) & 81.85 (1.16) \\
CHTR & 0.61 M & 616.99 M & 337.32 (5.47) & 278.15 (4.51) & 59.17 (0.96) \\
AVGO & 1.08 M & 435.77 M & 177.81 (4.08) & 133.44 (3.06) & 44.37 (1.02) \\
TMO & 0.93 M & 493.96 M & 214.06 (4.33) & 171.52 (3.47) & 42.53 (0.86) \\
ADBE & 1.43 M & 494.80 M & 161.14 (3.26) & 129.16 (2.61) & 31.97 (0.65) \\
LIN & 1.07 M & 270.08 M & 109.40 (4.05) & 79.80 (2.95) & 29.59 (1.10) \\
LMT & 0.96 M & 365.99 M & 144.76 (3.96) & 117.34 (3.21) & 27.41 (0.75) \\
COST & 1.45 M & 404.42 M & 124.43 (3.08) & 102.18 (2.53) & 22.24 (0.55) \\
\hline
 $\vdots$ & \multicolumn{5}{c|}{$\cdots$} \\
 JNJ & 4.49 M & 156.07 M & 19.59 (1.26) & 17.23 (1.10) & 2.36 (0.15) \\
  $\vdots$ & \multicolumn{5}{c|}{$\cdots$} \\
       \hline
       C & 18.34 M & 56.97 M & 3.08 (0.54) & 2.69 (0.47) & 0.39 (0.07) \\
INTC & 22.63 M & 51.50 M & 2.50 (0.48) & 2.11 (0.41) & 0.38 (0.07) \\
GM & 13.78 M & 44.10 M & 2.78 (0.63) & 2.43 (0.55) & 0.35 (0.08) \\
XOM & 5.87 M & 53.12 M & 2.52 (0.47) & 2.23 (0.42) & 0.29 (0.05) \\
AAPL & 87.88 M & 129.86 M & 2.62 (0.20) & 2.36 (0.18) & 0.26 (0.02) \\
PFE & 25.04 M & 40.06 M & 1.82 (0.46) & 1.57 (0.39) & 0.25 (0.06) \\
WFC & 4.62 M & 37.81 M & 1.65 (0.44) & 1.43 (0.38) & 0.22 (0.06) \\
BAC & 47.20 M & 34.29 M & 1.08 (0.32) & 0.93 (0.27) & 0.15 (0.04) \\
T & 45.78 M & 19.94 M & 0.61 (0.30) & 0.55 (0.27) & 0.06 (0.03) \\
F & 70.81 M & 11.81 M & 0.30 (0.26) & 0.26 (0.22) & 0.04 (0.03)  \\
       \hline
    \end{tabular}
    \label{tab:noiseless_TWAP_fixed}
\end{table}

\subsection{Noisy Order Book Shape}
\label{subsec:noisy}

The $\beta=.67$ estimated in \cite{almgren2005direct} is an average. In any given minute the limit order books may have a different power law near to but not equal to $.67$. To model this variation, we consider $\beta$ to be stochastic, i.e.,
\begin{align}
    \beta_{t} = 0.67 + \eta_{t},
\end{align} where $\eta_{t}$ is a random variable with uniform distribution on $(-0.3, 0.3)$. The range of noise is $0.6 / 0.67 \approx 0.9$. The trading loss then becomes 
\begin{align}
    \label{eq:noisy_loss}
   L =  \sum_{\ell=1}^{78} C_{\epsilon, \beta_{5\ell}} S_{5\ell}(V_{5\ell})^{-\tfrac{1}{\beta_{5\ell}+1}}|X_{5\ell} - X_{5\ell-5}|^{\tfrac{\beta_{5\ell}+2}{\beta_{5\ell}+1}},
\end{align} where $C_{\epsilon, \beta_{5\ell}}$ is the stochastic version of $C_{\epsilon, \beta}$ obtained by substituting $\beta_{5\ell}$ into \eqref{eq:liquid_ratio}.

\begin{figure}
    \centering
    \includegraphics[width=\textwidth]{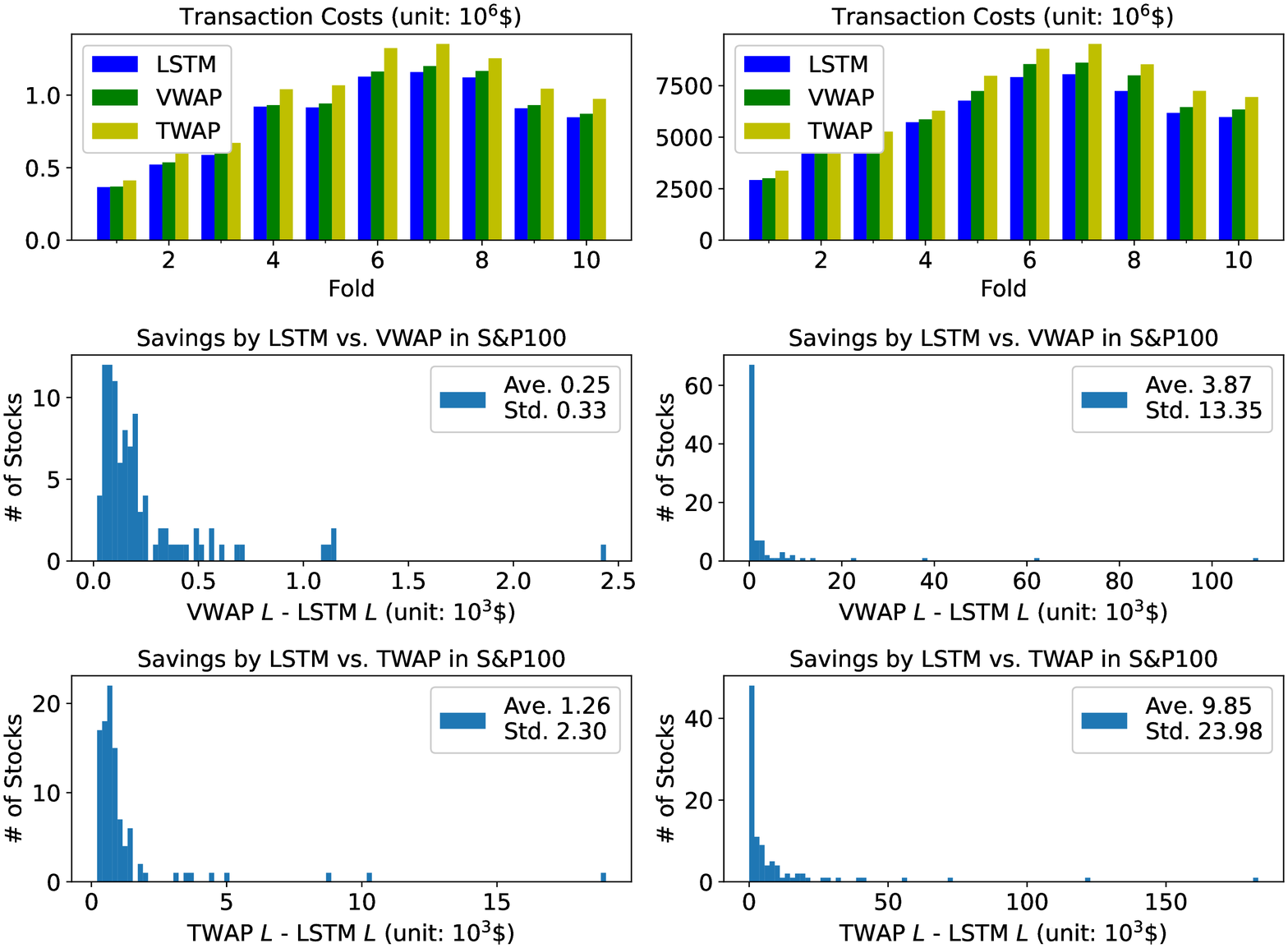}
    \caption{Performance of LSTM in noisy order book case. Left: $x_0=0.05A$. Right: $x_0=10^6$.}
    \label{fig:noisy_costs}
\end{figure}

The first row in Fig.~\ref{fig:noisy_costs} shows the average daily transaction cost $L$ for each fold in the noisy order book case. On average over the nine folds, the daily transaction cost  is \$847,328  for the LSTM strategy, \$871,606 for the VWAP strategy, and \$973,718  for the TWAP strategy for $x_0=.05\times A$. As for $x_0=10^6$, the daily transaction cost is \$5,975,930 for the LSTM, \$6,344,318 for the VWAP, and \$6,946,579 for the TWAP. Compared with the noiseless order book shape case, the transaction costs increase for all the strategies. However, the LSTM strategy still consistently has a smaller transaction cost than the VWAP and TWAP strategies. Note that the LSTM networks used are the same as in the noiseless order book shape case. Therefore, the LSTM strategy is robust to noise perturbations in $\beta$. As expected, we empirically observe that the difference between LSTM and VWAP in daily transaction cost decreases with the increase in noise intensity. The second row in Fig.~\ref{fig:noisy_costs} shows the transaction costs saved by using the LSTM strategy rather than the VWAP and TWAP strategies for each stock. Compared with the noiseless order book case, the average savings becomes smaller.  Similarly,  Tables~\ref{tab:noisy_VWAP} and \ref{tab:noisy_TWAP} show the top/bottom ten stock cases in which the LSTM saves the most/least transaction costs compared to VWAP and TWAP strategies, respectively, for $x_0=.05\times A$. The median savings are around \$160 for the VWAP case and \$690 for the TWAP case. Tables~\ref{tab:noisy_VWAP_fixed} and \ref{tab:noisy_TWAP_fixed} show the top/bottom ten stock cases for $x_0=10^6$. The median savings are around \$400 for the VWAP case and \$2,070 for the TWAP case. 

\begin{table}
    \centering
    \caption{The ten stocks with highest and the ten with lowest savings under LSTM compared with VWAP strategy for noisy order book case with $x_0=.05\times A$ (5\% of average volume), and the stock with the median saving.}
    \begin{tabular}{|c|ccccc|}
    \hline
      Ticker & Daily &  Equity Value  & VWAP (bps)  & LSTM (bps) & Savings (bps)  \\
     $\downarrow$ & Volume & of Traded (\$) &  (unit: \$$10^3$) & (unit: \$$10^3$) &   (unit: \$$10^3$) \\
         \hline
       AMZN & 42.65 M & 329.26 M & 59.85 (1.82) & 57.41 (1.74) & 2.44 (0.07) \\
TSLA & 0.93 M & 909.89 M & 171.62 (1.89) & 170.47 (1.87) & 1.15 (0.01) \\
GOOG & 13.78 M & 66.42 M & 12.25 (1.84) & 11.11 (1.67) & 1.14 (0.17) \\
AAPL & 87.88 M & 570.61 M & 83.39 (1.46) & 82.29 (1.44) & 1.11 (0.02) \\
MSFT & 8.54 M & 256.82 M & 36.59 (1.42) & 35.48 (1.38) & 1.11 (0.04) \\
NFLX & 6.03 M & 93.59 M & 18.77 (2.00) & 18.05 (1.93) & 0.71 (0.08) \\
UNH & 28.20 M & 39.53 M & 6.92 (1.75) & 6.24 (1.58) & 0.68 (0.17) \\
NVDA & 4.60 M & 274.19 M & 42.02 (1.53) & 41.41 (1.51) & 0.62 (0.02) \\
META & 3.72 M & 219.53 M & 32.52 (1.48) & 31.96 (1.46) & 0.57 (0.03) \\
TMO & 45.78 M & 22.87 M & 4.36 (1.91) & 3.80 (1.66) & 0.56 (0.25) \\
\hline
 $\vdots$ & \multicolumn{5}{c|}{$\cdots$} \\
 GD & 1.88 M & 7.79 M & 1.45 (1.86) & 1.29 (1.66) & 0.16 (0.20) \\
 $\vdots$ & \multicolumn{5}{c|}{$\cdots$} \\
       \hline
       KHC & 10.87 M & 8.63 M & 1.46 (1.70) & 1.41 (1.63) & 0.06 (0.07) \\
TGT & 2.96 M & 28.09 M & 4.82 (1.71) & 4.76 (1.69) & 0.06 (0.02)
 \\ MS & 8.03 M & 31.26 M & 4.76 (1.52) & 4.71 (1.51) & 0.05 (0.02) \\
DIS & 1.66 M & 66.71 M & 10.72 (1.61) & 10.67 (1.60) & 0.05 (0.01) \\
COP & 2.31 M & 18.26 M & 2.67 (1.46) & 2.62 (1.43) & 0.05 (0.03) \\
AIG & 4.07 M & 8.88 M & 1.46 (1.64) & 1.40 (1.58) & 0.05 (0.06) \\
DUK & 3.72 M & 8.90 M & 1.35 (1.52) & 1.30 (1.46) & 0.05 (0.05) \\
MO & 1.69 M & 14.41 M & 2.24 (1.56) & 2.20 (1.52) & 0.05 (0.03) \\
WBA & 5.69 M & 10.07 M & 1.52 (1.51) & 1.48 (1.47) & 0.04 (0.04) \\
SO & 5.93 M & 9.28 M & 1.34 (1.44) & 1.30 (1.40) & 0.04 (0.04)  \\
       \hline
    \end{tabular}
    \label{tab:noisy_VWAP}
\end{table}

\begin{table}
    \centering
    \caption{The ten stocks with highest and the ten with lowest savings under LSTM compared with TWAP strategy for noisy order book case with $x_0=.05\times A$ (5\% of average volume), and the stock with the median saving.}
    \begin{tabular}{|c|ccccc|}
    \hline
      Ticker & Daily & Equity Value  &  TWAP (bps)  & LSTM (bps) & Savings (bps)  \\
     $\downarrow$ & Volume & Traded (\$) &  (unit: \$$10^3$) & (unit: \$$10^3$) &   (unit: \$$10^3$) \\
         \hline
       TSLA & 28.20 M & 909.89 M & 189.50 (2.08) & 170.47 (1.87) & 19.03 (0.21) \\
AMZN & 42.65 M & 329.26 M & 67.78 (2.06) & 57.41 (1.74) & 10.37 (0.31) \\
AAPL & 87.88 M & 570.61 M & 91.02 (1.60) & 82.29 (1.44) & 8.73 (0.15) \\
NVDA & 32.57 M & 274.19 M & 46.33 (1.69) & 41.41 (1.51) & 4.92 (0.18) \\
MSFT & 20.61 M & 256.82 M & 40.02 (1.56) & 35.48 (1.38) & 4.53 (0.18) \\
NFLX & 3.85 M & 93.59 M & 21.74 (2.32) & 18.05 (1.93) & 3.69 (0.39) \\
META & 15.87 M & 219.53 M & 35.39 (1.61) & 31.96 (1.46) & 3.43 (0.16) \\
BA & 14.42 M & 139.82 M & 27.06 (1.94) & 23.84 (1.70) & 3.22 (0.23) \\
GOOG & 0.62 M & 66.42 M & 13.09 (1.97) & 11.11 (1.67) & 1.97 (0.30) \\
BAC & 47.20 M & 80.93 M & 13.73 (1.70) & 11.88 (1.47) & 1.85 (0.23) \\
\hline
 $\vdots$ & \multicolumn{5}{c|}{$\cdots$} \\
 FDX & 1.88 M & 21.55 M & 4.20 (1.95) & 3.51 (1.63) & 0.69 (0.32) \\
  $\vdots$ & \multicolumn{5}{c|}{$\cdots$} \\
       \hline
       COP & 6.52 M & 18.26 M & 2.95 (1.62) & 2.62 (1.43) & 0.33 (0.18) \\
KHC & 5.07 M & 8.63 M & 1.74 (2.01) & 1.41 (1.63) & 0.33 (0.39)  \\
AIG & 4.07 M & 8.88 M & 1.73 (1.95) & 1.40 (1.58) & 0.32 (0.36)  \\
EMR & 2.12 M & 8.54 M & 1.62 (1.90) & 1.31 (1.53) & 0.32 (0.37)  \\
CL & 2.94 M & 11.28 M & 1.87 (1.66) & 1.57 (1.40) & 0.30 (0.27)  \\
MET & 3.72 M & 9.63 M & 1.72 (1.78) & 1.42 (1.47) & 0.30 (0.31)  \\
SO & 3.12 M & 9.28 M & 1.59 (1.72) & 1.30 (1.40) & 0.29 (0.31) \\
DUK & 1.93 M & 8.90 M & 1.59 (1.79) & 1.30 (1.46) & 0.29 (0.32)  \\
DOW & 3.72 M & 9.69 M & 1.64 (1.69) & 1.39 (1.43) & 0.25 (0.26) \\
WBA & 4.62 M & 10.07 M & 1.72 (1.71) & 1.48 (1.47) & 0.24 (0.24) \\
       \hline
    \end{tabular}
    \label{tab:noisy_TWAP}
\end{table}

\begin{table}
    \centering
     \caption{The ten stocks with highest and the ten with lowest savings under LSTM compared with VWAP strategy for noisy order book case with $x_0=10^6$, and the stock with the median saving.}
    \begin{tabular}{|c|ccccc|}
    \hline
      Ticker & Daily &  Equity Value  & VWAP (bps)  & LSTM (bps) & Savings (bps)  \\
     $\downarrow$ & Volume & of Traded (\$) &  (unit: \$$10^3$) & (unit: \$$10^3$) &   (unit: \$$10^3$) \\
         \hline
       GOOG & 0.62 M & 2150.54 M & 1084.43 (5.04) & 974.45 (4.53) & 109.98 (0.51) \\
BKNG & 0.22 M & 2076.48 M & 1795.88 (8.65) & 1734.05 (8.35) & 61.83 (0.30) \\
BLK & 0.46 M & 707.64 M & 454.80 (6.43) & 417.02 (5.89) & 37.78 (0.53) \\
TMO & 45.78 M & 493.96 M & 201.56 (4.08) & 179.50 (3.63) & 22.06 (0.45) \\
AVGO & 1.08 M & 435.77 M & 155.41 (3.57) & 141.31 (3.24) & 14.09 (0.32) \\
ADBE & 1.43 M & 494.80 M & 145.13 (2.93) & 133.28 (2.69) & 11.85 (0.24) \\
CHTR & 0.61 M & 616.99 M & 299.32 (4.85) & 290.13 (4.70) & 9.19 (0.15) \\
LMT & 0.96 M & 365.99 M & 130.78 (3.57) & 121.84 (3.33) & 8.94 (0.24) \\
UNH & 28.20 M & 381.23 M & 87.05 (2.28) & 79.21 (2.08) & 7.84 (0.21) \\
GD & 0.90 M & 173.48 M & 70.89 (4.09) & 63.23 (3.64) & 7.65 (0.44) \\
\hline
 $\vdots$ & \multicolumn{5}{c|}{$\cdots$} \\
 CVS & 5.27 M & 77.55 M & 8.99 (1.16) & 8.59 (1.11) & 0.40 (0.05) \\
 $\vdots$ & \multicolumn{5}{c|}{$\cdots$} \\
       \hline
       BA & 14.42 M & 193.90 M & 13.29 (0.69) & 13.22 (0.68) & 0.07 (0.00) \\
EXC & 5.72 M & 32.63 M & 3.41 (1.05) & 3.35 (1.03) & 0.06 (0.02)  \\
KHC & 5.07 M & 34.06 M & 4.41 (1.30) & 4.36 (1.28) & 0.06 (0.02)  \\
GM & 13.78 M & 44.10 M & 2.55 (0.58) & 2.50 (0.57) & 0.05 (0.01)  \\
AAPL & 87.88 M & 129.86 M & 2.47 (0.19) & 2.43 (0.19) & 0.04 (0.00) \\
C & 18.34 M & 56.97 M & 2.82 (0.50) & 2.79 (0.49) & 0.03 (0.01)  \\
BK & 3.88 M & 44.78 M & 6.02 (1.34) & 5.99 (1.34) & 0.03 (0.01)  \\
WFC & 5.69 M & 37.81 M & 1.51 (0.40) & 1.48 (0.39) & 0.03 (0.01) \\
XOM & 4.62 M & 53.12 M & 2.32 (0.44) & 2.29 (0.43) & 0.02 (0.00) \\
BAC & 47.20 M & 34.29 M & 0.99 (0.29) & 0.97 (0.28) & 0.02 (0.01) \\
       \hline
    \end{tabular}
    \label{tab:noisy_VWAP_fixed}
\end{table}

\begin{table}
    \centering
    \caption{The ten stocks with highest and the ten with lowest savings under LSTM compared with TWAP strategy for noisy order book case with $x_0=10^6$, and the stock with the median saving.}
    \begin{tabular}{|c|ccccc|}
    \hline
      Ticker & Daily & Equity Value  &  TWAP (bps)  & LSTM (bps) & Savings (bps)  \\
     $\downarrow$ & Volume & Traded (\$) &  (unit: \$$10^3$) & (unit: \$$10^3$) &   (unit: \$$10^3$) \\
         \hline
       GOOG & 0.62 M & 2150.54 M & 1157.61 (5.38) & 974.45 (4.53) & 183.16 (0.85) \\
BKNG & 0.22 M & 2076.48 M & 1856.14 (8.94) & 1734.05 (8.35) & 122.09 (0.59) \\
BLK & 0.46 M & 707.64 M & 488.79 (6.91) & 417.02 (5.89) & 71.77 (1.01) \\
CHTR & 0.61 M & 616.99 M & 346.58 (5.62) & 290.13 (4.70) & 56.45 (0.91) \\
AVGO & 1.08 M & 435.77 M & 182.79 (4.19) & 141.31 (3.24) & 41.47 (0.95) \\
TMO & 0.93 M & 493.96 M & 219.81 (4.45) & 179.50 (3.63) & 40.31 (0.82) \\
ADBE & 1.43 M & 494.80 M & 165.55 (3.35) & 133.28 (2.69) & 32.27 (0.65) \\
LIN & 1.07 M & 270.08 M & 112.32 (4.16) & 83.64 (3.10) & 28.68 (1.06) \\
LMT & 0.96 M & 365.99 M & 148.46 (4.06) & 121.84 (3.33) & 26.62 (0.73) \\
COST & 1.45 M & 404.42 M & 127.93 (3.16) & 107.02 (2.65) & 20.91 (0.52) \\
\hline
 $\vdots$ & \multicolumn{5}{c|}{$\cdots$} \\
 QCOM & 6.20 M & 130.65 M & 14.88 (1.14) & 12.82 (0.98) & 2.07 (0.16) \\
  $\vdots$ & \multicolumn{5}{c|}{$\cdots$} \\
       \hline
       KO & 12.37 M & 51.75 M & 3.78 (0.73) & 3.25 (0.63) & 0.53 (0.10) \\
VZ & 5.69 M & 52.80 M & 3.35 (0.63) & 2.93 (0.55) & 0.42 (0.08) \\
C & 18.34 M & 56.97 M & 3.16 (0.55) & 2.79 (0.49) & 0.37 (0.07) \\
INTC & 22.63 M & 51.50 M & 2.56 (0.50) & 2.19 (0.43) & 0.37 (0.07) \\
GM & 13.78 M & 44.10 M & 2.85 (0.65) & 2.50 (0.57) & 0.35 (0.08) \\
XOM & 5.87 M & 53.12 M & 2.58 (0.49) & 2.29 (0.43) & 0.29 (0.05) \\
AAPL & 87.88 M & 129.86 M & 2.69 (0.21) & 2.43 (0.19) & 0.26 (0.02) \\
PFE & 25.04 M & 40.06 M & 1.87 (0.47) & 1.63 (0.41) & 0.24 (0.06) \\
WFC & 4.62 M & 37.81 M & 1.70 (0.45) & 1.48 (0.39) & 0.22 (0.06) \\
BAC & 47.20 M & 34.29 M & 1.11 (0.32) & 0.97 (0.28) & 0.14 (0.04) \\
       \hline
    \end{tabular}
    \label{tab:noisy_TWAP_fixed}
\end{table}

\section{Conclusion}
We have shown how LSTM can be used for optimal execution of large stock orders in a limit order book. Our backtests demonstrate that LSTM can outperform TWAP and VWAP-based strategies in order book models with both noiseless and noisy power-law parameter. It is possible that the improved performance of the LSTM is due to its ability to aggregate across multiple stocks and exploit co-dependence in the price and volume time series. 
There are a variety of future avenues to continue this work. One such direction is to include permanent price impact and to see how LSTM can adjust to early suborders adversely affecting price. Another direction would be to consider optimizing length of trading period and frequency of trading, both of which were static hyperparameters in this paper.

\section*{Declaration of Interest}
This work was  supported in part by NSF grant DMS-1907518 and in part by the New York University Abu Dhabi (NYUAD) Center for Artificial Intelligence and Robotics, funded by Tamkeen under the NYUAD Research Institute Award CG010.

\bibliography{refs.bib}

\bibliographystyle{plain}

\end{document}